\documentclass[12pt,fleqn,review,authoryear]{csda}
\usepackage{amsmath}
\usepackage{graphicx,psfrag,epsf}
\usepackage{enumerate}
\usepackage{natbib}
\usepackage{url}

\RequirePackage[letterpaper, portrait, margin=1in]{geometry}

\usepackage[mathscr]{eucal}
\usepackage{multirow}
\usepackage{amsfonts,bm}
\usepackage[toc,page]{appendix}
\usepackage[colorlinks,citecolor=blue,urlcolor=blue]{hyperref}
\usepackage{subcaption}
\usepackage{enumitem}
\usepackage{indentfirst}

\usepackage{bbm,mathtools,amsfonts,amsmath,amssymb,graphicx,bm,latexsym,natbib}
\usepackage[usenames,dvipsnames]{color}

\newcommand{\mple}{\widetilde\btheta}

\usepackage{amsthm}

\newcommand{\bthetas}{\btheta^\star}
\newcommand{\bthetah}{\widehat\btheta}

\newcommand{\qqq}{d} 
\newcommand{\rrr}{q} 
\newcommand{\norm}[1]{\lVert#1\rVert}

\usepackage[mathscr]{eucal}

\usepackage{centernot}
\usepackage{dlfltxbcodetips,bbm,mathtools,amsfonts,amsmath,amssymb,fancybox,graphicx,bm,latexsym}

\newcommand{\zs}{\bz^\star}

\newcommand{\zh}{\widehat{\bz}}

\newcommand{\lip}{{\tiny\mbox{Lip}}}

\newcommand{\logit}{\mbox{logit}}

\newcommand{\btt}{\begin{box}}
\newcommand{\ett}{\end{box}}

\newcommand{\define}{\mathop{=}\limits^{\mbox{\tiny def}}}

\usepackage[colorlinks,citecolor=blue,urlcolor=blue]{hyperref}

\DeclareMathOperator*{\argmax}{arg\,max}

\newcommand{\benum}{\begin{enumerate}}
\newcommand{\eenum}{\end{enumerate}}

\newcommand{\bq}{\begin{quote}\em}
\newcommand{\eq}{\end{quote}}
\newcommand{\bbq}{\begin{quote}\bf\em}
\newcommand{\ebq}{\end{quote}}

\newcommand{\ind}{\msim\limits^{\mbox{\tiny ind}}}
\newcommand{\iid}{\msim\limits^{\mbox{\tiny iid}}}

\renewcommand{\=}{&=&}

\newcommand{\lte}{&\leq&}
\newcommand{\gte}{&\geq&}
\newcommand{\mR}{\mathbb{R}}

\newcommand{\mbR}{\mathbb{R}}
\newcommand{\mbN}{\mathbb{N}}
\newcommand{\mbX}{\mathbb{X}}

\newcommand{\one}{\mathbbm{1}}

\newcommand{\mbE}{\mathbb{E}}

\newcommand{\mbZ}{\mathbb{Z}}

\newcommand{\mbS}{\mathbb{S}}
\newcommand{\mbP}{\mathbb{P}}

\newcommand{\hide}[1]{}
\newcommand{\ghost}[1]{}
\newcommand{\ba}{\begin{array}{llllllllll}}
\newcommand{\ea}{\end{array}}
\newcommand{\bea}{\begin{equation}\begin{array}{llllllllll}}
\newcommand{\eea}{\end{array}\end{equation}}
\newcommand{\be}{\begin{equation}\begin{array}{lllllllllllllllll}}
\newcommand{\beno}{\begin{equation}\begin{array}{lllllllllllll}\nonumber}
\newcommand{\ee}{\end{array}\end{equation}}
\newcommand{\bel}{\begin{equation}\begin{array}{lllllllllllll}\nonumber}
\newcommand{\eel}{\Box\end{array}\end{equation}}
\newcommand{\bi}{\begin{itemize}}
\newcommand{\ei}{\end{itemize}}
\newcommand{\ben}{\begin{enumerate}}
\newcommand{\een}{\end{enumerate}}
\newcommand{\alert}{\textcolor{black}}
\newcommand{\dis}{\displaystyle}
\newcommand{\dsum}{\displaystyle\sum\limits}

\newcommand{\dprod}{\displaystyle\prod\limits}

\newcommand{\mS}{\mathbb{S}}

\newcommand{\mA}{\mathscr{A}}

\newcommand{\s}{\vspace{0.25cm}}
\newcommand{\bx}{\bm{x}}

\newcommand{\bX}{\bm{X}}
\newcommand{\mX}{\mathbb{X}}

\newcommand{\mZ}{\mathbb{Z}}

\newcommand{\mL}{\mathbb{L}}

\newcommand{\bz}{\bm{z}}
\newcommand{\bZ}{\bm{Z}}

\newcommand{\balpha}{\bm{\alpha}}

\newcommand{\bTheta}{\bm\Theta}
\newcommand{\bta}{\bm{\eta}}
\newcommand{\btheta}{\boldsymbol{\theta}}

\newcommand{\bpi}{\mbox{\boldmath$\pi$}}

\newcommand{\msim}{\mathop{\rm \sim}}

\newcounter{comment}
\newenvironment{comment}[1][]{\refstepcounter{comment}\par\smallskip\noindent%
\textbf{Comment~\thecomment #1}:\vspace{0.25cm}\\ \rmfamily}{}
\newcommand{\bc}{\begin{comment}\em}
\newcommand{\ec}{\end{comment}}

\newcounter{ex}

\newcounter{counterexample}

\newcounter{definition}

\newcounter{theorem}
\newenvironment{theorem}[1][]{\refstepcounter{theorem}\par\smallskip\indent%
\textbf{Theorem~\thetheorem #1}.\em \rmfamily}{}

\newcounter{ttproof}
\setcounter{ttproof}{0}
\newcommand{\ttproof}{%
\addtocounter{ttproof}{1}%
{\medskip}\textsc{Proof of Theorem}{\medskip}
}

\newcounter{pproof}
\setcounter{pproof}{0}
\newcommand{\pproof}{%
\addtocounter{pproof}{1}%
{\medskip}\textsc{Proof of Proposition}{\medskip}
}

\newcounter{corollary}

\newcounter{ccproof}
\setcounter{ccproof}{0}

\newcounter{llproof}
\setcounter{llproof}{0}
\newcommand{\llproof}{%
\addtocounter{llproof}{1}%
{\medskip}\textsc{Proof of Lemma}{\medskip}
}

\newcounter{lemma}
\newenvironment{lemma}[1][]{\refstepcounter{lemma}\par\medskip\indent%
\textbf{Lemma~\thelemma #1}.\em \rmfamily}{}

\newcounter{proposition}
\newenvironment{proposition}[1][]{\refstepcounter{proposition}\par\medskip\indent%
\textbf{Proposition~\theproposition #1}.\em \rmfamily}{}

\newcounter{example}

\newcounter{com}

\newcounter{defin}

\newcounter{assumption}

\newcommand{\mQ}{M}
\renewcommand{\mL}{\mathscr{L}}

\newcommand{\longtitle}{Large-scale estimation of random graph models with local dependence}

\begin{document}

\begin{frontmatter}

\title{\longtitle}

\author{Sergii Babkin}
\address{Microsoft}

\author{Jonathan Stewart}
\address{Department of Statistics,
Rice University}

\author{Xiaochen Long}
\address{Department of Statistics,
Rice University}

\author{Michael Schweinberger\corref{cor1}}
\address{Department of Statistics,
Rice University}

\journal{Computational Statistics \& Data Analysis}

\begin{abstract}
A class of random graph models is considered,
combining features of exponential-family models and latent structure models,
with the goal of retaining the strengths of both of them while reducing the weaknesses of each of them.
An open problem is how to estimate such models from large networks.
A novel approach to large-scale estimation is proposed,
taking advantage of the local structure of such models for the purpose of local computing.
The main idea is that random graphs with local dependence can be decomposed into subgraphs,
which enables parallel computing on subgraphs and suggests a two-step estimation approach.
The first step estimates the local structure underlying random graphs.
The second step estimates parameters given the estimated local structure of random graphs.
Both steps can be implemented in parallel,
which enables large-scale estimation.
The advantages of the two-step estimation approach are demonstrated by simulation studies with up to 10,000 nodes and an application to a large Amazon product recommendation network with more than 10,000 products.
\end{abstract}

\begin{keyword}
exponential-family models \sep latent structure models \sep stochastic block models \sep variational methods \sep EM algorithms \sep MM algorithms
\end{keyword}

\end{frontmatter}

\section{Introduction} 
\label{sec:introduction}

The statistical analysis of network data is an emerging area in statistics \citep{Ko09a}. 
Network data arise in the study of insurgent and terrorist networks,
contact networks facilitating the spread of infectious diseases (e.g., corona viruses such as SARS and MERS; Ebola; HIV),
social networks,
and the World Wide Web.

Many models of network data have been proposed,
as described in recent review papers \citep{Fien12,HuKrSc12,latentspace,SmAsCa19,ScKrBu17,Ho18}.
Among the plethora of models,
two broad classes of models can be distinguished:
models with latent structure,
including stochastic block models \citep[e.g.,][]{NkSt01,BiCh09,RoChYu11} and latent space models \citep[e.g.,][]{HpRaHm01,HaRaTa07,sewell2015latent,SmAsCa19};
and exponential-family models of random graphs \citep{ergm.book,Ha13},
including Erd\H{o}s and R\'enyi random graphs,
logistic regression models,
and models resembling Markov random fields in spatial statistics \citep{Bj74}.
Both have advantages and disadvantages,
as one of the pioneers of statistical network analysis observed:
\begin{quote}\em
``I expect that, especially for modelling larger networks (with, say, a few hundred or more nodes), the latent space models will not be able to represent network structures as expressed by subgraph counts sufficiently well and the exponential random-graph models will not be able to represent the cohesive structure sufficiently well. Models that combine important features of these two approaches may be the next generation of social network models" \citep[][p.\ 324]{Sn07a}.
\end{quote}
In other words,
latent structure models capture who is close to whom,
but are not flexible models of many network phenomena of interest \citep[despite the fact that latent space models induce a weak tendency towards transitivity, see][]{HpRaHm01}.
Well-posed exponential-family models of random graphs can capture a vast range of network phenomena of interest (including, but not limited to, transitivity),
but the underlying assumptions of many of those models make more sense in small networks than large networks.
A case in point is the Markov random graphs of \citet{FoSd86}.
These models assume that possible edges $X_{i,j} \in \{0, 1\}$ and $X_{k,l} \in \{0, 1\}$ of pairs of nodes $\{i, j\}$ and $\{k, l\}$ are independent conditional on the rest of the graph provided $\{i, j\}$ and $\{k, l\}$ do not overlap,
but are otherwise dependent.
As a consequence,
each possible edge $X_{i,j}$ depends on $2\, (n-2)$ other possible edges,
where $n$ is the number of nodes.
If Markov random graphs were applied to online social networks such as Facebook,
then each possible friendship would depend on billions of other possible friendships.
Such dependence assumptions are implausible,
and when coupled with homogeneity assumptions regarding parameters can give rise to the well-studied issue of model near-degeneracy \citep{Ha03,Sc09b,ChDi11,Me17}.

One class of next-generation random graph models was introduced in \citet{ScHa13},
which combines important features of stochastic block models and exponential-family models,
with the goal of retaining the strengths of both of them while reducing the weaknesses of each of them.
The basic idea is that a set of nodes is partitioned into blocks,
and edges among nodes within and between blocks are governed by exponential-family models of random graphs with local dependence within blocks.
A simple example is a model where edges between blocks are independent Bernoulli random variables,
whereas edges within blocks are generated by an exponential-family model which encourages triangles within blocks but ensures that,
for each pair of nodes within a block,
the added value of additional edges and triangles decays.
Such models induce local dependence within blocks and the overall dependence induced by the model is weak provided the blocks are not too large.
We have shown elsewhere that such models are well-behaved---in contrast to the infamous triangle model first studied by \citet{St86},
\citet{Jo99},
\citet{HaJo99},
and others \citep[e.g.,][]{ChDi11}---and that statistical inference is possible and supported by statistical theory:
e.g.,
we have established concentration and consistency results for canonical and curved exponential-family models of random graphs with local dependence under the assumption that the blocks are known and the sizes of the blocks are similar in a well-defined sense \citep{ScSt16}.
In addition,
when the blocks are unknown,
the block memberships of most nodes can be recovered with high probability under weak dependence and smoothness conditions \citep{Sc17}.

While some progress has been made in terms of statistical theory,
computing remains challenging.
When the block structure is known \citep[which is the case in multilevel networks, e.g., networks consisting of units of armed forces,][]{multilevelnetwork},
statistical inference for parameters can rely on existing methods for estimating parameters of exponential-family models of random graphs \citep[e.g.,][]{StIk90,Sn02,HuHa04,CaFr09,HuHuHa12,OkGe12,AtLaRo12,Jin2013p927,ThKa17,Kr17,Byetal18,TaFr18}.
If the block structure is unknown,
however,
it needs to be estimated based on the observed network.
The recovery of unknown block structure resembles the recovery of unknown block structure in stochastic block models \citep[e.g.,][]{NkSt01,BiCh09,BiChLe11,RoChYu11,ChWoAi12,CeDaLa11,PrSuTaVo12,AmChBiLe13,RoQiFa13,ZhZh16,BiVoRo17,Gaetal17}.
However,
the recovery of unknown block structure is more challenging for exponential-family models with local dependence than for stochastic block models.
The main challenge is the intractability of the complete-data likelihood function,
i.e., 
the likelihood function given an observation of the network as well as the block structure.
The intractability of the complete-data likelihood function is rooted in the intractability of the normalizing constants of within-block probability mass functions,
which stems from the local dependence within blocks.

We present here a tractable approximation of the likelihood function,
leveraging concentration results for random graphs with local dependence.
Based on the approximation of the likelihood function,
we propose a two-step estimation approach that exploits the local structure of random graphs with local dependence for the purpose of local computing.
The first step estimates the block structure and decomposes the random graph into subgraphs with local dependence.
The decomposition of the random graph relies on approximations of the likelihood function supported by theoretical results.
The second step estimates parameters given the estimated block structure by using Monte Carlo maximum likelihood methods \citep{HuHa04} or maximum pseudolikelihood methods \citep{StIk90}.
Both steps can be implemented in parallel,
which enables large-scale estimation on multi-core computers or computing clusters.
We demonstrate the advantages of the two-step estimation approach by simulation studies with up to 10,000 nodes and an application to a large Amazon product recommendation network with more than 10,000 products.

The remainder of our paper is organized as follows. 
Section~\ref{sec:hergm} introduces models.
Section \ref{sec:approximation} discusses likelihood-based inference based on approximations of the likelihood function motivated by theoretical results, 
and Section~\ref{sec:variational} takes advantage of such approximations to estimate models.
Section~\ref{sec:sim_full} presents simulation results and Section~\ref{sec:application} presents an application.

\section{Models}
\label{sec:hergm} 

We consider random graphs with a set of nodes $\mA = \{1, \dots, n\}$ and a set of edges $\mathscr{E} \subset \mA \times \mA$.
Here, 
edges are regarded as random variables $X_{i,j} \in \{0, 1\}$,
where $X_{i,j} = 1$ if nodes $i$ and $j$ are connected by an edge and $X_{i,j} = 0$ otherwise.
We focus on undirected random graphs without self-edges,
although the methods introduced here can be extended to directed random graphs.
\alert{Throughout,
we denote by $\bX = (X_{i,j})_{i<j}^n$ the set of possible edges of a random graph, 
and by $\mbX = \{0, 1\}^{\binom{n}{2}}$ the sample space of $\bX$.}

\alert{Since the pioneering research of \citet{HpLs70,HpLs72,HpLs76} in the 1970s,
it is known that network data are dependent data.
In small networks, 
it is very well possible that each possible edge depends on many other possible edges,
and might in fact depend on all other possible edges.
In large networks,
however,
it is not credible that each possible edge depends on all other possible edges:
e.g.,
in networks with $n \gg $ 10,000 nodes---including the network used in Section \ref{sec:application}---it is implausible that each possible edge depends on all $\binom{n}{2} - 1 \gg$ 50,000,000 other possible edges.
Instead,
it is tempting to believe that the dependence among edges in large networks is local,
in the sense that each edge depends on a subset of other possible edges.}
We consider here a simple form of local dependence,
following \citet{ScHa13}.
Assume that $\mA$ is partitioned into $K \geq 2$ subsets of nodes $\mA_1, \dots, \mA_K$,
called blocks,
and let $\bz = (\bz_1, \dots, \bz_n)$ be the block memberships of nodes,
where $z_{i,k} = 1$ if node $i$ belongs to block $\mA_k$ and $z_{i,k} = 0$ otherwise.
We henceforth denote by $\bX_{k,k} = (X_{i,j})_{i < j:\; z_{i,k} = z_{j,k} = 1}^n$ the within-block edges among nodes in block $\mA_k$ ($k = 1, \dots, K$) and by $\bX_{k,l} = (X_{i,j})_{i, j:\; z_{i,k} = z_{j,l} = 1}^n$ the between-block edges among nodes  in block $\mA_k$ and nodes in block $\mA_l$ ($l < k = 1, \dots, K$). 

\s

\alert{\textbf{Definition. Local dependence.}
A model of a random graph $\bX$ satisfies local dependence if the probability mass function of $\bX$ can be factorized as follows:
\be
\label{locdef}
p_{\bta(\btheta, \bz)} (\bx)
&=& \dprod_{k=1}^K p_{\bta_{W,k}(\btheta, \bz)}(\bx_{k,k})\; \dprod_{l=1}^{k-1}\;\, \dprod_{i, j:\; z_{i,k} = z_{j,l} = 1}^n\, p_{\bta_{B,k,l}(\btheta, \bz)}(x_{i,j}),
&& \bx \in \mbX,
\ee
where 
\bi
\item $p_{\bta(\btheta, \bz)} (\bx)$ is the probability mass of graph $\bx$,
parameterized by $\bta(\btheta, \bz) \in \mR^d$;
\item $p_{\bta_{W,k}(\btheta, \bz)}(\bx_{k,k})$ is the probability mass of within-block subgraph $\bx_{k,k}$,
parameterized by $\bta_{W,k}(\btheta, \bz) \in \mR^{d_{W,k}}$ ($k = 1, \dots, K$);
\item $p_{\bta_{B,k,l}(\btheta, \bz)}(x_{i,j})$ is the probability mass of between-block edge $x_{i,j}$,
parameterized by $\bta_{B,k,l}(\btheta, \bz) \in \mR^{d_{B,k,l}}$ ($i, j:\; z_{i,k} = z_{j,l} = 1$);
\item the parameter vector $\bta(\btheta, \bz) \in \mR^d$ consists of the subvectors $\bta_{W,k}(\btheta, \bz) \in \mR^{d_{W,k}}$ ($k = 1, \dots, K$) and $\bta_{B,k,l}(\btheta, \bz) \in \mR^{d_{B,k,l}}$ ($l < k = 1, \dots, K$).
\ei
The map $\bta: \bTheta \times \mZ \mapsto \mR^{\qqq}$ depends on the model.
In general,
$\bta: \bTheta \times \mZ \mapsto \mR^{\qqq}$ may be a linear or non-linear function of a parameter vector $\btheta \in \bTheta \subseteq \mR^{\rrr}$ of dimension $\rrr \leq \qqq$.
An example is given by the curved exponential-family parameterizations used in Section \ref{sec:application},
where $\bta: \bTheta \times \mZ \mapsto \mR^{\qqq}$ is a non-linear function of a parameter vector $\btheta \in \bTheta \subseteq \mR^{\rrr}$ of dimension $\rrr < \qqq$.
Well-chosen curved exponential-family parameterizations help ensure that the added value of additional edges, triangles, and other subgraph configurations decays \citep{StScBoMo18},
which is plausible in many applications and can improve the in-sample performance of models \citep{HuGoHa08} as well as the out-of-sample performance of models \citep{StScBoMo18}.
The factorization of the probability mass function of models with local dependence has at least three implications.
First,
edges among nodes in block $\mA_k$ can depend on other edges in block $\mA_k$ ($k = 1, \dots, K$).
Second,
the factorization implies that edges among nodes in block $\mA_k$ do not depend on edges that involve nodes in other blocks ($k = 1, \dots, K$).
Third,
edges between blocks are independent.
In other words,
the dependence is local in the sense that it is confined to within-block subgraphs.}

We introduce here two examples of models with local dependence,
to be used as motivating examples throughout Sections \ref{sec:approximation}---\ref{sec:sim_full}.

\s

\alert{{\bf Example 1. Stochastic block model.}
An important special case of models with local dependence is given by stochastic block models \citep[e.g.,][]{NkSt01,BiCh09,RoChYu11}.
Stochastic block models assume that possible edges $X_{i,j}$ between nodes $i$ and $j$ in blocks $k$ and $l$ are independent Bernoulli$(\mu_{k,l})$ random variables,
where the probability $\mu_{k,l} \in (0, 1)$ of an edge depends on the blocks $k$ and $l$ of nodes $i$ and $j$,
respectively.
Stochastic block models are special cases of models with local dependence,
having within-block probability mass functions of the form
\beno
p_{\bta_{W,k}(\btheta, \bz)}(\bx_{k,k})
&=& \dprod_{i<j:\; z_{i,k}=z_{j,k}=1}^n \mu_{k,k}^{x_{i,j}}\, (1 - \mu_{k,k})^{1-x_{i,j}}\s\s
\\
&=& \dprod_{i<j:\; z_{i,k}=z_{j,k}=1}^n \exp\left(\theta_{W,k,k}\, x_{i,j} - \log(1 + \exp(\theta_{W,k,k}))\right)\s\s
\\
&\propto& \exp\left(\theta_{W,k,k}\, \dsum_{i<j}^n x_{i,j}\, z_{i,k}\, z_{j,k}\right),
\ee
where
\beno
\theta_{W,k,k}
&=& \logit(\mu_{k,k}) 
&\in& \mR,
&& k = 1, \dots, K.
\ee
The between-block probability mass functions are of the form
\beno
p_{\bta_{B,k,l}(\btheta, \bz)}(\bx_{k,l})
&=& \dprod_{i<j:\; z_{i,k}=z_{j,l}=1}^n \mu_{k,l}^{x_{i,j}}\, (1 - \mu_{k,l})^{1-x_{i,j}}\s\s
\\
&=& \dprod_{i<j:\; z_{i,k}=z_{j,l}=1}^n \exp\left(\theta_{B,k,l}\, x_{i,j} - \log(1 + \exp(\theta_{B,k,l}))\right)\s\s
\\
&=& \exp\left(\theta_{B,k,l}\, \dsum_{i<j}^n x_{i,j}\, z_{i,k}\, z_{j,l}\right),
\ee
where
\beno
\theta_{B,k,l}
&=& \logit(\mu_{k,l}) 
&\in& \mR,
&& l < k = 1, \dots, K.
\ee
As a consequence,
the joint probability mass function of random graph $\bX$ is proportional to
\beno
p_{\bta(\btheta, \bz)}(\bx)
&\propto& \dprod_{k=1}^K \exp\left(\theta_{W,k,k}\, \dsum_{i<j}^n x_{i,j}\, z_{i,k}\, z_{j,k}\right)\; \dprod_{l=1}^{k-1} \exp\left(\theta_{B,k,l}\, \dsum_{i<j}^n x_{i,j}\, z_{i,k}\, z_{j,l}\right).
\ee
However,
while stochastic block models are popular,
the assumption that edges within and between blocks are independent is restrictive,
because edges among nodes that are close---in the sense of being members of the same block---may very well be dependent.}

\s

\alert{{\bf Example 2. Model with local dependence.}
To allow edges within blocks to be dependent,
we build on the stochastic block model described above,
but tilt the within-block probability mass function of the stochastic block model as follows:
\beno
p_{\bta_{W,k}(\btheta, \bz)}(\bx_{k,k})
\;\propto\; \exp\left(\theta_{W,k,k,1}\, \dsum_{i<j}^n x_{i,j}\, z_{i,k}\, z_{j,k}\right)\; \exp\left(\theta_{W,k,k,2}\, \dsum_{i<j}^n x_{i,j}\, z_{i,k}\, z_{j,k}\, \one_{i,j}(\bx_{k,k})\right),
\ee
where $\theta_{W,k,k,1} \in \mR$ and $\theta_{W,k,k,2} \in \mR$ and $\one_{i,j}(\bx_{k,k})$ is an indicator function,
which is $1$ if nodes $i$ and $j$ are both connected to one or more other nodes in block $k$,
and is $0$ otherwise.
If $x_{i,j} = 1$ and $\one_{i,j}(\bx_{k,k}) = 1$,
the edge between nodes $i$ and $j$ is called transitive,
because $i$ and $j$ form transitive triples with other nodes,
which are known as triangles in the random graph literature.
The first term of the within-block probability mass function shown above corresponds to the within-block probability mass function of the stochastic block model.
The second term tilts the within-block probability mass function of the stochastic block model:
If $\theta_{W,k,k,2} > 0$,
the model rewards within-block subgraphs with transitive edges,
whereas $\theta_{W,k,k,2} < 0$ penalizes them,
and $\theta_{W,k,k,2} = 0$ neither rewards nor penalizes them---in which case the model reduces to the stochastic block model.}

\alert{Using the same between-block probability mass functions as the stochastic block model,
the joint probability mass function of random graph $\bX$ is proportional to
\beno
\label{example}
p_{\bta(\btheta, \bz)}(\bx)
&\propto& \dprod_{k=1}^K \exp\left(\theta_{W,k,k,1}\, \dsum_{i<j}^n x_{i,j}\, z_{i,k}\, z_{j,k}\right)\; \exp\left(\theta_{W,k,k,2}\, \dsum_{i<j}^n x_{i,j}\, z_{i,k}\, z_{j,k}\, \one_{i,j}(\bx_{k,k})\right)\s\s
\\
&\times& \dprod_{l=1}^{k-1} \exp\left(\theta_{B,k,l}\, \dsum_{i<j}^n x_{i,j}\, z_{i,k}\, z_{j,l}\right).
\ee}

\alert{The resulting model can capture an excess in the expected number of transitive edges within blocks, relative to the stochastic block model.
To demonstrate,
note that the resulting model is an exponential-family model and---by exponential-family theory \citep[][Corollary 2.5, p.\ 37]{Br86}---the expected number of transitive edges in block $\mA_k$ satisfies
\[
\begin{array}{cccccccccc}
\underbrace{\mbE_{\theta_{W,k,k,1},\, \theta_{W,k,k,2}>0}\; s_{k,k,2}(\bX)}
&>& \underbrace{\mbE_{\theta_{W,k,k,1},\, \theta_{W,k,k,2}=0}\; s_{k,k,2}(\bX)},
& k = 1, \dots, K,\\
\mbox{\em model with local dependence} && \mbox{\em stochastic block model}
\end{array}
\]
where $s_{k,k,2}(\bX)$ is the number of transitive edges in block $\mA_k$ and $\mbE_{\theta_{W,k,k,1},\, \theta_{W,k,k,2}}\, s_{k,k,2}(\bX)$ is the expectation of $s_{k,k,2}(\bX)$.
In other words,
the expected number of transitive edges in block $\mA_k$ is greater under the model with $\theta_{W,k,k,2} > 0$ than under the stochastic block model with $\theta_{W,k,k,2} = 0$,
assuming that both have the same edge parameters $\theta_{W,k,k,1}$ ($k = 1, \dots, K$).
Therefore,
the model with local dependence can capture an excess in the expected number of transitive edges within blocks,
relative to the stochastic block model.
In addition,
models with local dependence can capture excesses in the expected number of other subgraph statistics within blocks by adding suitable model terms.}
 
\paragraph{How models with within-block edge and transitive edge terms differ from the ``triangle model"}

It is worth noting that the model with local dependence induced by within-block edge and transitive edge terms differs in two important ways from the infamous triangle model with edge and triangle terms,
which has been known to be ill-behaved since the 1980s \citep{St86,Jo99,Sc09b,ChDi11}:
\bi
\item The model with local dependence restricts dependence among edges to subsets of nodes,
i.e.,
blocks.
As long as the blocks are not too large,
the overall dependence induced by the model is weak.
By contrast, 
the triangle model does not restrict dependence to subsets of nodes,
leading to undesirable behavior in large graphs.
\item Within blocks,
the model with local dependence and positive within-block transitive edge parameters assumes that,
for each pair of nodes,
the value added by the first triangle to the log odds of the conditional probability of an edge is positive,
but additional triangles add nothing.
By contrast,
the triangle model assumes that,
for each pair of nodes,
each additional triangle has the same added value,
which is unreasonable and results in undesirable behavior in large graphs.
\ei
These sensible assumptions ensure that models with local dependence have more desirable properties than the triangle model \citep{ScSt16}.

\paragraph{Exponential-family representations of models}

It is worth noting that Models 1 and 2 can be represented in exponential-family form:
\beno
p_{\bta(\btheta, \bz)} (\bx)
&=&\exp(\langle\bta(\btheta, \bz),\, s(\bx)\rangle - \psi(\bta(\btheta, \bz))),
& \bx \in \mbX,
\ee
where $\langle\bta(\btheta, \bz),\, s(\bx)\rangle$ denotes the inner product of a vector of natural parameters $\bta: \bTheta \times \mZ \mapsto \mR^d$ and a vector of sufficient statistics $s: \mbX \mapsto \mR^d$,
and $\psi(\bta(\btheta, \bz))$ ensures that $p_{\bta(\btheta, \bz)} (\bx)$ sums to $1$.
While exponential-family representations are not needed to introduce Models 1 and 2,
the properties of exponential families facilitate the theoretical results in Section \ref{sec:theory},
concerned with approximations of likelihood functions.

\section{Likelihood-based inference}
\label{sec:approximation}

While it is tempting to base statistical inference for the unknown block structure $\bz$ and the unknown parameter vector $\btheta$ on the likelihood function,
likelihood-based inference for models with local dependence is challenging.
The main reason is that the probability mass function $p_{\bta(\btheta, \bz)}(\bx)$ is intractable,
because the within-block probability mass functions $p_{\bta_{W,k}(\btheta, \bz)}(\bx_{k,k})$ are intractable ($k = 1, \dots, K$).
The intractability of $p_{\bta_{W,k}(\btheta, \bz)}(\bx_{k,k})$ is rooted in the fact that its normalizing constant is a sum over all $\exp(\binom{|\mA_k|}{2} \log 2)$ possible within-block subgraphs of block $\mA_k$,
which cannot be computed unless $\mA_k$ is small,
i.e.,
unless $|\mA_k| \ll 10$ ($k = 1, \dots, K$).

To facilitate likelihood-based inference,
we introduce tractable approximations of the intractable probability mass function $p_{\bta(\btheta, \bz)}(\bx)$ in Section \ref{sec:idea} and support them by theoretical results in Section \ref{sec:theory}.
A statistical algorithm that takes advantage of such approximations is introduced in Section \ref{sec:variational}.

\subsection{Approximate likelihood functions: motivation}
\label{sec:idea}

Suppose that we want to estimate both $\bz$ and $\btheta$.
It is natural to estimate them by using an iterative algorithm that cycles through updates of $\bz$ and $\btheta$ as follows:
\bi
\item[] Step 1: Update $\bz$ given $\btheta$.
\item[] Step 2: Update $\btheta$ given $\bz$.
\ei
The algorithm sketched above is generic and cannot be used in practice,
but regardless of which specific algorithm is used---whether EM, Monte Carlo EM, variational EM, Bayesian Markov chain Monte Carlo, or other algorithms---most of them have in common that Step 1 is either infeasible or time-consuming,
whereas Step 2 is less problematic than Step 1.

\paragraph{Step 1}
Step 1 is either infeasible or time-consuming,
because the probability mass function $p_{\bta(\btheta, \bz)}(\bx)$ is intractable.
To demonstrate,
consider a Bayesian Markov chain Monte Carlo algorithm that updates $\bz = (\bz_1, \dots, \bz_n)$ given $\btheta$ by Gibbs sampling.
Gibbs sampling of $\bz_1, \dots, \bz_n$ turns out to be infeasible,
because the full conditional distributions of $\bz_1, \dots, \bz_n$ depend on the intractable within-block probability mass functions $p_{\bta_{W,k}(\btheta, \bz)}(\bx_{k,k})$ ($k = 1, \dots, K$).
One could approximate them by Monte Carlo samples of within-block subgraphs,
but such approximations may not generate Markov chain Monte Carlo samples from the target distribution \citep{LiJi16} and are problematic on computational grounds:
\bi
\item Using Monte Carlo approximations of within-block probability mass functions is infeasible when the number of nodes $n$ is large,
because such approximations are needed for each update of each of the $n$ block memberships $\bz_1, \dots, \bz_n$.
\item Worse, 
the $n$ block memberships $\bz_1, \dots, \bz_n$ cannot be updated in parallel,
because the block membership of one node depends on the block memberships of other nodes.
\ei
As a consequence,
Step 1 is infeasible when $n$ is large.

\paragraph{Step 2}
Step 2 is less problematic than Step 1.
While the probability mass function $p_{\bta(\btheta, \bz)}(\bx)$ is intractable and may have to be approximated by Monte Carlo methods \citep{HuHa04},
such Monte Carlo approximations are needed once to update $\btheta$ given $\bz_1, \dots, \bz_n$,
whereas Monte Carlo approximations are needed $n$ times to update $\bz_1, \dots, \bz_n$ given $\btheta$ one by one.
In addition,
the probability mass function $p_{\bta(\btheta, \bz)}(\bx)$ decomposes into between- and within-block probability mass functions $p_{\bta(\btheta, \bz)}(\bx_{k,l})$ ($k \leq l = 1, \dots, K$) and hence within-block probability mass functions can be approximated in parallel,
i.e.,
on multi-core computers or computing clusters.

\paragraph{Approximations}
To enable feasible updates of $\bz$ given $\btheta$ when $n$ is large,
we are interested in approximating the intractable probability mass function $p_{\bta(\btheta, \bz)}(\bx)$ by a tractable probability mass function.
To do so,
we focus on models with between-block edge terms $\theta_{B,k,l} \sum_{i<j}^n x_{i,j}\, z_{i,k}\, z_{j,l}$ and within-block edge terms $\theta_{W,k,k,1} \sum_{i<j}^n x_{i,j}\, z_{i,k}\, z_{j,k}$ along with additional within-block terms that induce local dependence within blocks.
We denote the vector of between- and within-block edge parameters by $\btheta_1$ and the vector of all other within-block parameters by $\btheta_2$.
We assume that $\btheta_2 = \bm{0}$ reduces the model to the stochastic block model.
An example is given by the model with between-block edge terms and within-block edge and transitive edge terms described in Section \ref{sec:hergm}:
the parameter vector $\btheta_1$ consists of the between-block edge parameters $\theta_{B,k,l}$ ($l < k = 1, \dots, K$) and the within-block edge parameters $\theta_{W,k,k,1}$ ($k = 1, \dots, K$),
whereas the parameter vector $\btheta_2$ consists of the within-block transitive edge parameters $\theta_{W,k,k,2}$ ($k = 1, \dots, K$).
If the parameter vector $\btheta_2$ is set to $\bm{0}$,
the model reduces to the stochastic block model,
under which edges within and between blocks are independent.

Such models have two useful properties:
\bi
\item The probability mass functions $p_{\bta(\btheta, \bz)}(\bx)$ and $p_{\bta(\btheta_1, \btheta_2=\bm{0}, \bz)}(\bx)$ impose the same probability law on between-block subgraphs.
\item The probability mass function $p_{\bta(\btheta_1, \btheta_2=\bm{0}, \bz)}(\bx)$ is tractable,
because edges between and within blocks are independent under all $\bz$.
\ei
We henceforth approximate $p_{\bta(\btheta, \bz)}(\bx)$ by $p_{\bta(\btheta_1, \btheta_2=\bm{0}, \bz)}(\bx)$,
which corresponds to the probability mass function of the stochastic block model.

The idea underlying the approximation is that when the blocks are not too large,
most pairs of nodes are not members of the same block.
Since $p_{\bta(\btheta, \bz)}(\bx)$ and $p_{\bta(\btheta_1, \btheta_2=\bm{0}, \bz)}(\bx)$ impose the same probability law on possible edges between pairs of nodes that are not members of the same block,
$p_{\bta(\btheta, \bz)}(\bx)$ and $p_{\bta(\btheta_1, \btheta_2=\bm{0}, \bz)}(\bx)$ agree on most of the random graph.
Therefore,
$p_{\bta(\btheta, \bz)}(\bx)$ can be approximated by $p_{\bta(\btheta_1, \btheta_2=\bm{0}, \bz)}(\bx)$ for the purpose of updating $\bz$ given $\btheta$.
Suppose, 
e.g.,
that we consider to update $\bz$ given $\btheta$ by replacing $\bz$ by $\bz^\prime \neq \bz$.
We may decide to do so if the loglikelihood ratio
\beno
\log \dfrac{p_{\bta(\btheta, \bz^\prime)}(\bx)}{p_{\bta(\btheta, \bz)}(\bx)}
\= \log p_{\bta(\btheta, \bz^\prime)}(\bx) - \log p_{\bta(\btheta, \bz)}(\bx)
\ee
is large:
e.g.,
the acceptance probability of Metropolis-Hastings algorithms depends on the loglikelihood ratio above.
If $p_{\bta(\btheta, \bz)}(\bx)$ can be approximated by $p_{\bta(\btheta_1, \btheta_2=\bm{0}, \bz)}(\bx)$,
we can base the decision on $\log p_{\bta(\btheta_1, \btheta_2=\bm{0}, \bz^\prime)}(\bx) - \log p_{\bta(\btheta_1, \btheta_2=\bm{0}, \bz)}(\bx)$ rather than $\log p_{\bta(\btheta, \bz^\prime)}(\bx) - \log p_{\bta(\btheta, \bz)}(\bx)$,
because
\beno
&& \log p_{\bta(\btheta, \bz^\prime)}(\bx) - \log p_{\bta(\btheta, \bz)}(\bx)
\= [\log p_{\bta(\btheta_1, \btheta_2=\bm{0}, \bz^\prime)}(\bx) - \log p_{\bta(\btheta_1, \btheta_2=\bm{0}, \bz)}(\bx)]\s
\\
&+& [\log p_{\bta(\btheta, \bz^\prime)}(\bx) - \log p_{\bta(\btheta_1, \btheta_2=\bm{0}, \bz^\prime)}(\bx)]
&-& [\log p_{\bta(\btheta, \bz)}(\bx) - \log p_{\bta(\btheta_1, \btheta_2=\bm{0}, \bz)}(\bx)].
\ee
Therefore,
as long as
\beno
\max\limits_{\bz}\, |\log p_{\bta(\btheta, \bz)}(\bx) - \log p_{\bta(\btheta_1, \btheta_2=\bm{0}, \bz)}(\bx)|
\ee
is small,
we have
\beno
\log p_{\bta(\btheta, \bz^\prime)}(\bx) - \log p_{\bta(\btheta, \bz)}(\bx)
&\approx& \log p_{\bta(\btheta_1, \btheta_2=\bm{0}, \bz^\prime)}(\bx) - \log p_{\bta(\btheta_1, \btheta_2=\bm{0}, \bz)}(\bx).
\ee

The advantage of approximating $p_{\bta(\btheta, \bz)}(\bx)$ by $p_{\bta(\btheta_1, \btheta_2=\bm{0}, \bz)}(\bx)$ is that there exist methods for stochastic block models to estimate the block structure from large random graphs \citep[e.g.,][]{DaPiRo08,RoChYu11,AmChBiLe13,VuHuSc12}.
We take advantage of such methods in Section \ref{sec:variational},
but we first shed light on the conditions under which $\max_{\bz}\, |\log p_{\bta(\btheta, \bz)}(\bx) - \log p_{\bta(\btheta_1, \btheta_2=\bm{0}, \bz)}(\bx)|$ is small.
 
\subsection{Approximate likelihood functions: theoretical results}
\label{sec:theory}

We show that updates of $\bz$ given $\btheta$ can be based on $p_{\bta(\btheta_1, \btheta_2=\bm{0}, \bz)}(\bx)$ rather than $p_{\bta(\btheta, \bz)}(\bx)$ by showing that
\beno
\max\limits_{\bz}\, |\log p_{\bta(\btheta, \bz)}(\bX) - \log p_{\bta(\btheta_1, \btheta_2=\bm{0}, \bz)}(\bX)|
\ee
is small with high probability when the blocks are not too large.

We start with a special case in Theorem \ref{t.concentration2} and then present more general results in Theorem \ref{t.concentration}.
\alert{To prepare the ground for Theorems \ref{t.concentration2} and \ref{t.concentration},
let $\mZ = \{1, \dots, K\}^n$ be the set of all block structures,
and denote by $m(\bz)$ the size of the largest block under $\bz \in \mbZ$.
Let $\mbS \subseteq \mZ$ be any subset of block structures that includes the data-generating block structure $\zs \in \mbZ$,
and denote by $m(\mbS)$ the size of the largest block among all block structures $\bz \in \mbS$,
so that $m(\bz) \leq m(\mbS)$ for all $\bz \in \mbS$.
We allow the size of the largest block $m(\mbS)$ to increase as a function of the number of nodes $n$:
e.g.,
$m(\mbS)$ may be a constant multiple of $\log n$ or $n^{\alpha}$ ($\alpha < 1$).
The size of the largest data-generating block is denoted by $\norm{\mA}_\infty = \max_{1 \leq k \leq K} |\mA_k|$.} 

\vspace{-.25cm}

\begin{theorem}
\label{t.concentration2}
Consider the model with between-block edge terms and within-block edge and transitive edge terms described in Section \ref{sec:hergm}.
Let $\bTheta = \prod_{k\leq l}^K \bTheta_{k,l}$ be the parameter space,
$\bTheta_{k,k}$ be a compact subset of $\mbR^2$ ($k = 1, \dots, K$),
and $\bTheta_{k,l}$ be a compact subset of $\mbR$ ($l < k = 1, \dots, K$).
Choose $\epsilon \in (0, 1)$ as small as desired.
Then there exists a universal constant $c > 0$ such that,
for all $\btheta\in\bTheta$,
with at least probability $1 - \epsilon$,
\beno
\max\limits_{\bz\in\mbS} |\log p_{\bta(\btheta, \bz)}(\bX) - \log p_{\bta(\btheta_1, \btheta_2=\bm{0}, \bz)}(\bX)|\, <\, 2\, c\, \sqrt{-\log\dfrac{\epsilon}{2} + n \log K}\, \sqrt{K}\, m(\mbS)^2\, \norm{\mA}_\infty^2\, \log n.
\ee
\end{theorem}

The proof of Theorem \ref{t.concentration2} can be found in the supplement.
The basic idea underlying Theorem \ref{t.concentration2} is that the deviation $\max_{\bz} |\log p_{\bta(\btheta, \bz)}(\bx) - \log p_{\bta(\btheta_1, \btheta_2=\bm{0}, \bz)}(\bx)|$ cannot be too large when the blocks are not too large,
because most pairs of nodes are not members of the same block and $p_{\bta(\btheta, \bz)}(\bx)$ and $p_{\bta(\btheta_1, \btheta_2=\bm{0}, \bz)}(\bx)$ impose the same probability law on possible edges between pairs of nodes that are not members of the same block.

\alert{Theorem \ref{t.concentration2} implies that the largest deviation of the loglikelihood function under the unrestricted model and the restricted model is smaller than $n^2$ with high probability,
provided the blocks are not too large.
To see that,
observe that $\norm{\mA}_\infty \leq m(\mbS)$ implies
\beno
2\, c\, \sqrt{-\log\dfrac{\epsilon}{2} + n \log K}\, \sqrt{K}\, m(\mbS)^2\, \norm{\mA}_\infty^2\, \log n
\hide{
\lte 2\, c\, \sqrt{-\log\dfrac{\epsilon}{2} + n \log n}\, \sqrt{n}\, m(\mbS)^2\, \norm{\mA}_\infty^2\, \log n\s
\\
\lte 2\, c\, \sqrt{-2\, \log\dfrac{\epsilon}{2}\, n \log n}\; \sqrt{n}\; m(\mbS)^2\; \norm{\mA}_\infty^2\, \log n\s
\\
\lte 2\; c\; \sqrt{-2\, \log\dfrac{\epsilon}{2}}\; n\; \norm{\mA}_\infty^2\; m(\mbS)^2\; (\log n)^{3/2}\s
\\
\lte \delta(\epsilon)\; n\; \norm{\mA}_\infty^2\; m(\mbS)^2\; (\log n)^{3/2}
}
\lte \delta(\epsilon)\; n\; m(\mbS)^4\; (\log n)^{3/2},
\ee
where $\delta(\epsilon) > 0$ is a function of $\epsilon$ but is not a function of the number of nodes $n$.
In other words,
as long as the size $m(\mbS)$ of the largest block in $\mbS$ satisfies $m(\mbS) \ll n^{1/4}\, /\, (\log n)^{3/8}$,
the maximum deviation is much smaller than $n^2$ with high probability.
As a consequence,
the largest deviation of the loglikelihood function under the unrestricted model and the restricted model is small with high probability---note that in dense random graphs many quantities are of order $n^2$,
so quantities of order less than $n^2$ may be considered small.
For example,
consider Bernoulli$(\mu)$ random graphs,
under which possible edges $X_{i,j}$ are independent Bernoulli$(\mu)$ random variables \citep{ErRe59,ErRe60}.
If a Bernoulli$(\mu)$ random graph is dense in the sense that $\mu = \mbE\, X_{i,j} \in (0, 1)$ does not decrease as a function of the number of nodes $n$,
then the expected number of edges is of order $n^2$,
\beno
\mbE\, \dsum_{i<j}^n X_{i,j} 
\= \dis\binom{n}{2}\, \mu,
\ee
and so is the expected loglikelihood function of the natural parameter $\theta=\logit(\mu) \in \mR$, 
\beno
\mbE\, \log p_{\theta}(\bX)
\= \dis\binom{n}{2}\, \left(\theta\; \mu - \log(1 + \exp(\theta))\right).
\ee
Other quantities are likewise of order $n^2$ in dense random graphs,
so quantities of order less than $n^2$ may be considered small.
Thus,
the largest deviation of the loglikelihood function under the unrestricted model and the restricted model is small with high probability.}

\alert{Last, 
but not least, 
note that $m(\mbS) \ll n^{1/4}\, /\, (\log n)^{3/8}$ implies that the number of blocks $K$ must satisfy $K\, \gg\, n^{3/4}\, (\log n)^{3/8}$.
In other words,
the number of blocks $K$ needs to grow as function of the number of nodes $n$.
It is worth noting that in the special case of stochastic block models,
it is known that $K$ is allowed to grow as fast as $n\, / \log n$ in dense-graph settings \citep{ZhZh16}.} 

It turns out that the result in Theorem \ref{t.concentration2} is not limited to the model with between-block edge terms and within-block edge and transitive edge terms,
but is a special case of more general results.
To introduce these more general results,
we make the following assumptions.
We assume that the probability mass function $p_{\bta(\btheta, \bz)} (\bx)$ can be represented in exponential-family form,
\beno
p_{\bta(\btheta, \bz)} (\bx)
&=&\exp(\langle\bta(\btheta, \bz),\, s(\bx)\rangle - \psi(\bta(\btheta, \bz))),
& \bx \in \mbX,
\ee
and satisfies local dependence as defined in Section \ref{sec:hergm};
note that all models used in our paper can be represented in exponential-family form.
We assume that $\bta: \bTheta \times \mbZ \mapsto \bm{\Xi}$ and that $\bm{\Xi} \subseteq \mbox{int}(\mbN)$ is a subset of the interior $\mbox{int}(\mbN)$ of the natural parameter space $\mbN$ of the exponential family \citep{Br86}.
Let $\mbE \equiv \mbE_{\bta^\star}$ be the expectation under the data-generating parameter vector $\bta^\star \equiv \bta(\bthetas, \zs)$,
where $(\bthetas, \zs) \in \bTheta \times \mbZ$ denotes the data-generating value of $(\btheta, \bz) \in \bTheta \times \mbZ$.
We denote by $d: \mbX \times \mbX \mapsto \{0, 1, 2, \dots\}$ the Hamming metric,
which is defined by
\beno
d(\bx_1, \bx_2)
\= \dsum_{i<j}^n \one_{x_{1,i,j} \neq x_{2,i,j}},
& (\bx_1, \bx_2) \in \mX \times \mX,
\ee
where $\one_{x_{1,i,j} \neq x_{2,i,j}}$ is $1$ if $x_{1,i,j} \neq x_{2,i,j}$ and is $0$ otherwise.
The main assumptions can then be stated as follows.
\bi
\item[[C.1]\hspace{-.15cm}] There exists $c>0$ such that, 
for all $(\btheta, \bz) \in \bTheta \times \mZ$ and all $(\bx_1,\bx_2) \in \mbX \times \mbX$,
\beno
\left|\left\langle\bta(\btheta, \bz),\, s(\bx_1) - s(\bx_2)\right\rangle\right|
&\leq& c\; d(\bx_1, \bx_2)\; m(\bz)\, \log n.
\ee
\item[[C.2]\hspace{-.15cm}] There exists $c>0$ such that, 
for all $(\btheta_{k,l,1},\, \btheta_{k,l,2}) \in \bTheta_{k,l} \times \bTheta_{k,l}$ and all $\bz \in \mbZ$,
\beno
\left|\left\langle\bta_{k,l}(\btheta_{k,l,1}, \bz) - \bta_{k,l}(\btheta_{k,l,2}, \bz),\, \mbE_{\bta(\btheta, \bz)}\, s_{k,l}(\bX)\right\rangle\right|
&\leq& c\; \norm{\btheta_{k,l,1} - \btheta_{k,l,2}}\; m(\bz)^2\, \log n,
\ee
where $\bta_{k,l}(\btheta_{k,l}, \bz)$, $\btheta_{k,l}$, and $s_{k,l}(\bx)$ denote the subvectors of $\bta(\btheta, \bz)$, $\btheta$, and $s(\bx)$ corresponding to the subgraph between blocks $k$ and $l$ ($l < k$) or the subgraph of block $k$ ($k = l$)
and $\bTheta_{k,l}$ is a compact subset of $\mbR^{\rrr_{k,l}}$ 
($k \leq l = 1, \dots, K$).
\ei
\alert{Conditions [C.1] and [C.2] are smoothness conditions:
[C.1] states that the inner product of natural parameters and sufficient statistics must not be too sensitive to changes of edges,
whereas [C.2] states that the inner product of between- and within-block natural parameters and expected sufficient statistics must not be too sensitive to changes of parameters.}

\alert{An example of a model violating condition [C.1] is a model containing a sufficient statistic $s(\bx)$ of the form
\beno
s(\bx)
\= 
\begin{cases}
0 & \mbox{ if } \sum_{i<j}^n x_{i,j} = 0\s
\\
\binom{n}{2} & \mbox{ if } \sum_{i<j}^n x_{i,j} \in \{1, \dots, \binom{n}{2}\}.
\end{cases}
\ee 
Under such models,
adding a single edge can change the inner product of natural parameters and sufficient statistics by a multiple of $n^2$.
That would violate [C.1],
because [C.1] assumes that changing a single edge changes the inner product by at most $c\; m(\bz)\, \log n \leq c\; n \log n$.}
 
\alert{An example of a model violating condition [C.2] is a model with within-block sufficient statistics that count the number of triangles within blocks,
\beno
s_{k,k}(\bx)
\= \dsum_{h<i<j:\; h,\, i,\, j\, \in\, \mA_k}^n x_{i,h}\, x_{j,h}\, x_{i,j},
& k = 1, \dots, K,
\ee
with within-block parameters of the form $\eta_{k,k}(\btheta, \bz) = \theta_{k,k} \in \mR$ ($k = 1, \dots, K$).
If all nodes belong to block $\mA_1$,
then the inner product of the block's within-block natural parameter and expected sufficient statistic vector is a multiple of $|\theta_{1,1} - \theta_{1,1}^\prime|\, n^3$,
where $\theta_{1,1}$ and $\theta_{1,1}^\prime$ are two possible values of the within-block triangle parameter $\theta_{1,1}$ of block $\mA_1$.
As a result,
[C.2] would be violated,
because [C.2] requires the inner product to be at most $c\; |\theta_{1,1} - \theta_{1,1}^\prime|\; m(\bz)^2\, \log n\, \leq\, c\; |\theta_{1,1} - \theta_{1,1}^\prime|\; n^2\, \log n$.}

\alert{However,
the fact that some model specifications violate conditions [C.1] and [C.2] is not necessarily a major concern,
for two reasons.
First,
the specifications that violate conditions [C.1] and [C.2] are, 
more often than not, 
of limited interest in applications:
e.g.,
we are not aware of any good reason for using the sufficient statistic in the first example,
and the triangle term in the second example is known to induce model near-degeneracy in large networks and may therefore not be useful in practice \citep{St86,Jo99,Sc09b,ChDi11}.
Second,
while some ill-posed specifications do not satisfy conditions [C.1] and [C.2],
there are many well-posed specifications that do satisfy them:
e.g.,
conditions [C.1] and [C.2] are satisfied by the model with between-block edge and within-block edge and transitive edge terms,
which we verify in the proof of Theorem \ref{t.concentration2}.
In addition,
conditions [C.1] and [C.2] cover the models with size-dependent parameterizations used in Sections \ref{sec:sim_full} and \ref{sec:application}.}

The following result,
Theorem \ref{t.concentration},
is a generalization of Theorem \ref{t.concentration2}.

\begin{theorem}
\label{t.concentration}
Consider a model with local dependence satisfying conditions [C.1] and [C.2].
Choose $\epsilon \in (0, 1)$ as small as desired. 
Then there exists a universal constant $c > 0$ such that,
for all $\btheta\in\bTheta$,
with at least probability $1 - \epsilon$,
\beno
\max\limits_{\bz\in\mbS} |\log p_{\bta(\btheta, \bz)}(\bX) - \log p_{\bta(\btheta_1, \btheta_2=\bm{0}, \bz)}(\bX)|\, <\, 2\, c\, \sqrt{-\log\dfrac{\epsilon}{2} + n \log K}\, \sqrt{K}\, m(\mbS)^2\, \norm{\mA}_\infty^2\, \log n.
\ee
\end{theorem}

The proof of Theorem \ref{t.concentration} can be found in the supplement.
An application of Theorem \ref{t.concentration} to the model with between-block edge terms and within-block edge and transitive edge terms can be found in Theorem \ref{t.concentration2}.
\alert{\paragraph{Trade-off between $m(\mS)$ and the recovery of block structure} 
It is worth noting that the upper bound $m(\mS)$ on the sizes of blocks cannot be too small,
because it is not possible to recover the block structure with high probability when the blocks are too small.
However,
\citet{ZhZh16} showed that under stochastic block models the number of blocks $K$ is allowed to grow as fast as $n\, / \log n$ in dense-graph settings,
which implies that the sizes of the blocks can be as small as $\log n$.
These important issues are studied in more depth in, e.g., \citet{ZhZh16} and \citet{Gaetal17}.}

\section{Two-step estimation approach}
\label{sec:variational}

We propose a two-step estimation approach that takes advantage of the theoretical results of Section \ref{sec:approximation} and enables large-scale estimation of models with local dependence.

To describe the two-step estimation approach,
assume that $\bz = (\bz_1, \dots, \bz_n)$ is the observed value of a random variable $\bZ = (\bZ_1, \dots, \bZ_n)$ with distribution
\beno
\bZ_i &\iid& \mbox{Multinomial}(1,\, \bpi = (\pi_1, \dots, \pi_K)),
& i = 1, \dots, n.
\ee
It is natural to base statistical inference on the observed-data likelihood function
\beno
\mL(\btheta, \bpi)
&=& \dsum_{\bz \in \mbZ} p_{\bta(\btheta, \bz)}(\bx)\, p_{\bpi}(\bz).
\ee
The problem is that $\mL(\btheta, \bpi)$ is intractable,
because $p_{\bta(\btheta, \bz)}(\bx)$ is intractable and the set $\mbZ$ contains $\exp(n \log K)$ elements. 

The first problem can be solved by taking advantage of the theoretical results of Section \ref{sec:approximation},
which suggest that $p_{\bta(\btheta, \bz)}(\bx)$ can be approximated by $p_{\bta(\btheta_1, \btheta_2=\bm{0}, \bz)}(\bx)$ provided that the blocks are not too large. 
\alert{A complication is that $p_{\bta(\btheta, \bz)}(\bx)$ and $p_{\bta(\btheta_1, \btheta_2=\bm{0}, \bz)}(\bx)$ may not be close when the block structure $\bz \in \mZ\, \setminus\, \mbS$ is not contained in $\mbS$,
in which case some of the blocks can be large and the within-block models of large blocks can induce strong dependence.
In the worst case, 
all nodes belong to a single block, 
in which case $p_{\bta(\btheta, \bz)}(\bx)$ can induce strong dependence throughout the random graph while $p_{\bta(\btheta_1, \btheta_2=\bm{0}, \bz)}(\bx)$ induces no dependence at all.}
However,
the basic inequality
\beno
\dsum_{\bz \in \mbS} p_{\bta(\btheta, \bz)}(\bx)\, p_{\bpi}(\bz)
\lte \mL(\btheta, \bpi)
&\leq& \dsum_{\bz \in \mbS} p_{\bta(\btheta, \bz)}(\bx)\, p_{\bpi}(\bz) + \mbP_{\bpi}(\bZ \in \mbZ \setminus \mbS)
\ee
suggests that as long as the event $\bZ \in \mbZ \setminus \mbS$ is a rare event in the sense that $\mbP_{\bpi}(\bZ \in \mbZ \setminus \mbS)$ is close to $0$,
$\mL(\btheta, \bpi)$ can be approximated by $\mL(\btheta_1, \btheta_2\hspace{-.1cm}=\hspace{-.1cm}\bm{0}, \bpi)$:
\beno
\mL(\btheta, \bpi)
&=& \dsum_{\bz \in \mbZ} p_{\bta(\btheta, \bz)}(\bx)\, p_{\bpi}(\bz)
&\approx& \dsum_{\bz \in \mbS} p_{\bta(\btheta, \bz)}(\bx)\, p_{\bpi}(\bz)\s
\\
&\approx& \dsum_{\bz \in \mbS} p_{\bta(\btheta_1, \btheta_2=\bm{0}, \bz)}(\bx)\, p_{\bpi}(\bz)
&\approx& \dsum_{\bz \in \mbZ} p_{\bta(\btheta_1, \btheta_2=\bm{0}, \bz)}(\bx)\, p_{\bpi}(\bz)
&=& \mL(\btheta_1, \btheta_2\hspace{-.1cm}=\hspace{-.1cm}\bm{0}, \bpi).
\ee
The assumption that $\bZ \in \mbZ \setminus \mbS$ is a rare event---i.e., the probabilities $\pi_1, \dots, \pi_K$ are small---makes sense in a wide range of applications,
because communities in real-world networks tend to be small \citep[see, e.g., the discussion of][]{RoChYu11}. 
Therefore,
as long as $\bZ \in \mbZ \setminus \mbS$ is a rare event,
we can base statistical inference concerning the block structure on $\mL(\btheta_1, \btheta_2\hspace{-.1cm}=\hspace{-.1cm}\bm{0}, \bpi)$ rather than $\mL(\btheta, \bpi)$.
To simplify the notation, 
we write henceforth $\mL(\btheta_1, \bpi)$ instead of $\mL(\btheta_1, \btheta_2\hspace{-.1cm}=\hspace{-.1cm}\bm{0}, \bpi)$.

The second problem can be solved by methods developed for stochastic block models,
because $\mL(\btheta_1, \bpi)$ is the observed-data likelihood function of a stochastic block model.
There are many stochastic block model methods that could be used,
such as profile likelihood \citep{BiCh09}, 
pseudo-likelihood \citep{AmChBiLe13},
spectral clustering \citep{RoChYu11}, 
and variational methods \citep{DaPiRo08,VuHuSc12}.
Among these methods,
we found that the variational methods of \citet{VuHuSc12} work best in practice.
In addition,
the variational methods of \citet{VuHuSc12} have the advantage of being able to estimate stochastic block models from networks with hundreds of thousands of nodes due to a running time of $O(n)$ for sparse random graphs and $O(n^2)$ for dense random graphs \citep{VuHuSc12}.
Some consistency and asymptotic normality results for variational methods for stochastic block models were established by \citet{CeDaLa11} and \citet{BiChChZh13}.

Variational methods approximate $\ell(\btheta_1, \bpi) = \log\mL(\btheta_1, \bpi)$ by introducing an auxiliary distribution $a(\bz)$ with support $\mbZ$ and lower bound $\ell(\btheta_1, \bpi)$ by using Jensen's ineqality:
\beno
\label{eq:lower_bound}
\ell(\btheta_1, \bpi)
\= \log \dsum_{\bz \in \mbZ} a(\bz)\, \dfrac{p_{\bta(\btheta_1, \btheta_2=\bm{0}, \bz)}(\bx)\, p_{\bpi}(\bz)}{a(\bz)}\s
\\
\gte \dsum_{\bz \in \mbZ} a(\bz)\, \log \dfrac{p_{\bta(\btheta_1, \btheta_2=\bm{0}, \bz)}(\bx)\, p_{\bpi}(\bz)}{a(\bz)}
&\define& \hat\ell(\btheta_1, \bpi).
\ee
Each auxiliary distribution with support $\mbZ$ gives rise to a lower bound on $\ell(\btheta_1, \bpi)$.
To choose the best auxiliary distribution---i.e., the auxiliary distribution that gives rise to the tightest lower bound on $\ell(\btheta_1, \bpi)$---we choose a family of auxiliary distributions and select the best member of the family.
In practice,
an important consideration is that the resulting lower bound is tractable.
Therefore,
we confine attention to a family of auxiliary distributions under which the resulting lower bounds are tractable.
A natural choice is given by a family of  auxiliary distributions under which the block memberships are independent:
\beno
\bZ_i &\ind& \mbox{Multinomial}(1,\, \bm{\alpha}_i = (\alpha_{i,1}, \dots, \alpha_{i,K})),
& i = 1, \dots, n.
\ee
By the independence of block memberships under the auxiliary distribution,
we obtain the following tractable lower bound on $\ell(\btheta_1, \bpi)$:
\beno
\label{eq:approx_lb}
&& \hat\ell(\balpha; \btheta_1, \bpi) 
\;\define\; \dsum_{\bz \in \mbZ} a_{\balpha}(\bz)\, \log \dfrac{p_{\bta(\btheta_1, \btheta_2=\bm{0}, \bz)}(\bx)\, p_{\bpi}(\bz)}{a_{\balpha}(\bz)}\s
\\
\= \dsum_{i<j}^{n} \dsum_{k = 1}^{K} \dsum_{l = 1}^{K} \alpha_{i,k}\, \alpha_{j,l} \log p_{\bta(\btheta_1, \btheta_2=\bm{0}, z_{i,k}=1,\, z_{j,l}=1,\bz_{-i,j})}(x_{i,j})
+ \dsum_{i=1}^n \dsum_{k = 1}^{K} \alpha_{i,k}\, (\log \pi_k - \log \alpha_{i,k}),
\ee
where $p_{\bta(\btheta_1, \btheta_2=\bm{0}, z_{i,k}=1,\, z_{j,l}=1,\bz_{-i,j})}(x_{i,j})$ denotes the marginal probability mass function of $X_{i,j}$ and $\bz_{-i,j}$ denotes the block memberships of all nodes excluding nodes $i$ and $j$.

In practice,
we obtain the best lower bound on $\ell(\btheta_1, \bpi)$ by maximizing $\hat\ell(\balpha; \btheta_1, \bpi)$ with respect to $\balpha$.
Direct maximization of $\hat\ell(\balpha; \btheta_1, \bpi)$ with respect to $\balpha$ is possible but inconvenient,
because $\hat\ell(\balpha; \btheta_1, \bpi)$ contains products of $\alpha_{i,k}$ and $\alpha_{j,l}$.
As a consequence,
a fixed-point update of $\alpha_{i,k}$ depends on $(n - 1)\, K$ other terms $\alpha_{j,l}$ and hence fixed-point updates tend to be time-consuming and get stuck in local maxima,
as demonstrated by \citet{VuHuSc12}.
\citet{VuHuSc12} proposed an elegant approach to alleviating the problem by using minorization-maximization methods \citep{HuLa04}.
Such methods construct a minorizing function that approximates $\hat\ell(\balpha; \btheta_1, \bpi)$ but is easier to maximize than $\hat\ell(\balpha; \btheta_1, \bpi)$.
A function $\mQ(\balpha; \btheta_1, \bpi, \balpha^{(t)})$ of $\balpha$ minorizes $\hat\ell(\balpha; \btheta_1, \bpi)$ at point $\balpha^{(t)}$ at iteration $t$ of an iterative algorithm for maximizing $\hat\ell(\balpha; \btheta_1, \bpi)$ if
\beno
\mQ(\balpha; \btheta_1, \bpi, \balpha^{(t)}) \lte \hat\ell(\balpha; \btheta_1, \bpi) & \mbox{for all} & \balpha,\s
\\
\mQ(\balpha^{(t)}; \btheta_1, \bpi, \balpha^{(t)}) &=& \hat\ell(\balpha^{(t)}; \btheta_1, \bpi), 
\ee
where $\btheta_1, \bpi, \balpha^{(t)}$ are fixed.
In other words,
$\mQ(\balpha; \btheta_1, \bpi, \balpha^{(t)})$ is bounded above by $\hat\ell(\balpha; \btheta_1, \bpi)$ for all $\balpha$ and touches $\hat\ell(\balpha; \btheta_1, \bpi)$ at $\balpha = \balpha^{(t)}$.
As a result,
increasing $\mQ(\balpha; \btheta_1, \bpi, \balpha^{(t)})$ with respect to $\balpha$ increases $\hat\ell(\balpha; \btheta_1, \bpi)$.
\citet{VuHuSc12} showed that the following function minorizes $\hat\ell(\balpha; \btheta_1, \bpi)$ at point $\balpha^{(t)}$:
\beno
\label{eq:minorizer}
\mQ(\balpha; \btheta_1, \bpi, \balpha^{(t)}) 
&=& \dsum_{i<j}^n \dsum_{k = 1}^K \dsum_{l = 1}^K
\left(\alpha_{i,k}^2\; \dfrac{\alpha_{j,l}^{(t)}}{2\, \alpha_{i,k}^{(t)}} + \alpha_{j,l}^2\; \dfrac{\alpha_{i,k}^{(t)}}{2\, \alpha_{j,l}^{(t)}}\right) \log p_{\bta(\btheta_1, \btheta_2=\bm{0}, z_{i,k}=1,\, z_{j,l}=1,\bz_{-i,j})}(x_{i,j})\s 
\\
&+& \dsum_{i=1}^{n} \dsum_{k = 1}^{K} \alpha_{i,k} \left[\log \pi_k^{(t)} - \log \alpha_{i,k}^{(t)} + \left(1 - \dfrac{\alpha_{i,k}}{\alpha_{i,k}^{(t)}}\right)\right].
\ee
The minorizing function $\mQ(\balpha; \btheta_1, \bpi, \balpha^{(t)})$ is easier to maximize than $\hat\ell(\balpha; \btheta_1, \bpi)$,
because it replaces the products of $\alpha_{i,k}$ and $\alpha_{j,l}$ by sums of $\alpha_{i,k}^2$ and $\alpha_{j,l}^2$.
An additional advantage is that the maximization of $\mQ(\balpha; \btheta_1, \bpi, \balpha^{(t)})$ amounts to $n$ quadratic programming problems,
which can be solved in parallel.

\begin{table}[t]
\begin{center}
\fbox{
\begin{minipage}{6in}
\vspace{.125cm}
\begin{itemize}
\item[1.] Estimate $\bz$ along with $\bpi$ and $\btheta_1$ by iterating:
\begin{itemize}
\item[1.1] Update $\balpha$ by increasing $\mQ(\balpha; \btheta_1^{(t)}, \bpi^{(t)}, \balpha^{(t)})$ with respect to $\balpha_i$ subject to $\alpha_{i,k} \geq 0$ and $\sum_{k=1}^K \alpha_{i,k} = 1$ and denote the update by $\balpha_i^{(t+1)}$ ($i = 1, \dots, n$).
\item[1.2] Update $\bpi$ and $\btheta_1$ by maximizing $\hat\ell(\balpha^{(t+1)}; \btheta_1, \bpi)$ with respect to $\bpi$ and $\btheta_1$:
\bi
\item[---] Update $\pi_k^{(t + 1)} = (1/n)\, \sum_{i=1}^n \alpha_{i,k}^{(t + 1)}$,\;
$k = 1, \dots, K$.
\item[---] Update $\btheta_1^{(t+1)} = \argmax_{\btheta_1\in\bTheta_1} \hat\ell(\balpha^{(t+1)}; \btheta_1, \bpi^{(t+1)})$.
\ei
\end{itemize}
Upon convergence, 
we estimate the block memberhip indicators by $\widehat{z}_{i,k} = 1$ if $k = \argmax_{1 \leq l \leq K} \widehat{\alpha}_{i,l}$ and $\widehat{z}_{i,k} = 0$ otherwise ($i = 1, \dots, n$, $k = 1, \dots, K$),
where $\widehat{\balpha}$ denotes the final value of $\balpha$.
\item[2.] \alert{Estimate $\btheta$ given $\zh$ by $\bthetah = \argmax_{\btheta\in\bTheta} \hat\ell_{\zh}(\btheta)$ or $\mple = \argmax_{\btheta\in\bTheta} \tilde\ell_{\zh}(\btheta)$.}
\end{itemize}
\end{minipage}
}
\s
\caption{\label{tab:algorithm}Two-step estimation approach.}
\end{center}
\end{table}

We therefore propose a two-step estimation approach as described in Table \ref{tab:algorithm}. 
We discuss the two steps below and conclude with some comments on parallel computing.

\paragraph{Step 1}
The first step estimates $\bz$ based on $\balpha$.
We do so by increasing $\mQ(\balpha; \btheta_1, \bpi, \balpha^{(t)})$ with respect to $\balpha_i$ subject to the constraints $\alpha_{i,k} \geq 0$ and $\sum_{k=1}^K \alpha_{i,k} = 1$ ($i = 1, \dots, n$).
We increase rather than maximize $\mQ(\balpha; \btheta_1, \bpi, \balpha^{(t)})$,
because maximizing $\mQ(\balpha; \btheta_1, \bpi, \balpha^{(t)})$ is more time-consuming and algorithms maximizing $\mQ(\balpha; \btheta_1, \bpi, \balpha^{(t)})$ are more prone to end up in local maxima than algorithms increasing $\mQ(\balpha; \btheta_1, \bpi, \balpha^{(t)})$.
Since $\hat\ell(\balpha; \btheta_1, \bpi)$ and $\mQ(\balpha; \btheta_1, \bpi, \balpha^{(t)})$ depend on $\btheta_1$ and $\bpi$ and both are unknown,
we iterate between updates of $\balpha$ and updates of $\btheta_1$ and $\bpi$.
The updates of $\btheta_1$ and $\bpi$ are based on maximizing $\hat\ell(\balpha; \btheta_1, \bpi)$ with respect to $\btheta_1$ and $\bpi$ and are identical to the updates of \citet{VuHuSc12},
because $\btheta_2=\bm{0}$ reduces the model to a stochastic block model.
As a convergence criterion,
we use
\beno
\dfrac{|\hat\ell(\balpha^{(t+1)}; \btheta_1^{(t+1)}, \bpi^{(t+1)}) - \hat\ell(\balpha^{(t)}; \btheta_1^{(t)}, \bpi^{(t)})|}{\hat\ell(\balpha^{(t+1)}; \btheta_1^{(t+1)}, \bpi^{(t+1)})}
&<& \gamma,
\ee
where $\gamma > 0$ is a small constant.
Upon convergence, 
we estimate the block memberhip indicators by $\widehat{z}_{i,k} = 1$ if $k = \argmax_{1 \leq l \leq K} \widehat{\alpha}_{i,l}$ and $\widehat{z}_{i,k} = 0$ otherwise ($i = 1, \dots, n$, $k = 1, \dots, K$),
where $\widehat{\balpha}$ denotes the final value of $\balpha$.

\paragraph{Step 2}
We estimate $\btheta$ given $\widehat{\bz}$ by Monte Carlo maximum likelihood methods \citep{HuHa04} or maximum pseudolikelihood methods \citep{StIk90}.
Monte Carlo maximum likelihood methods exploit the fact that the loglikelihood function induced by $\zh$,
which is defined by
\beno
\label{identity}
\ell_{\zh}(\btheta)
&=& \log p_{\bta(\btheta, \zh)}(\bx) - \log p_{\bta(\btheta_0,\, \zh)}(\bx),
\ee
can be written as
\beno
\ell_{\zh}(\btheta)
&=& \langle\bta(\btheta, \widehat{\bz}) - \bta(\btheta_0,\, \zh),\, s(\bx)\rangle
- \log\mbE_{\bta(\btheta_0,\, \zh)} \exp(\langle\bta(\btheta, \widehat{\bz}) - \bta(\btheta_0,\, \zh),\, s(\bX)\rangle),
\ee
where $\btheta_0$ is a fixed parameter vector (e.g., $\btheta_0$ may be an educated guess of $\bthetas$).
In general,
the expectation $\mbE_{\bta(\btheta_0,\, \zh)}$ is intractable, 
but it can be estimated by a Monte Carlo sample average based on a Monte Carlo sample of graphs generated under $\bta(\btheta_0,\, \zh)$.
Therefore,
we can approximate $\ell_{\zh}(\btheta)$ by
\beno
\hat\ell_{\zh}(\btheta)
&=& \langle\bta(\btheta, \widehat{\bz}) - \bta(\btheta_0, \zh), s(\bx)\rangle - \log \widehat{\mbE}_{\bta(\btheta_0, \zh)} \exp(\langle\bta(\btheta, \widehat{\bz}) - \bta(\btheta_0, \zh), s(\bX)\rangle),
\ee
where $\widehat{\mbE}_{\bta(\btheta_0,\, \zh)}$ is a Monte Carlo approximation of $\mbE_{\bta(\btheta_0,\, \zh)}$ based on a Monte Carlo sample of graphs generated by using $\bta(\btheta_0,\, \zh)$.
Hence $\btheta$ given $\zh$ can be estimated by 
\beno
\label{mcmle}
\bthetah 
\= \argmax\limits_{\btheta\in\bTheta} \hat\ell_{\zh}(\btheta).
\ee
Additional details on Monte Carlo maximum likelihood methods can be found in \citet{HuHa04}.
We note that the local dependence of the model facilitates parallel computing,
which is discussed in the following paragraph.
Standard errors of $\bthetah$ can be based on the estimated Fisher information matrix,
although such standard errors are conditional on the estimated block structure $\zh$ and therefore do not reflect the uncertainty about $\zh$.
A parametric bootstrap approach would be an interesting alternative to capturing the additional uncertainty due to $\zh$,
but would be time-consuming.

\alert{
An alternative is to estimate $\btheta$ given $\zh$ by maximum pseudolikelihood estimators \citep{StIk90}.
Maximum pseudolikelihood estimators were invented by \citet{Bj74} to sidestep intractable likelihood functions of Markov random fields in spatial statistics,
and are known to be consistent estimators of Markov random fields with exponential-family parameterizations \citep[e.g.,][]{Co92}.
\citet{StIk90} adapted them to exponential-family random graph models.
While believed to be inferior to maximum likelihood estimators when the dependence among edges is strong and propagates throughout the random graph \citep[e.g.,][]{DuGiHa09},
maximum pseudolikelihood estimators seem to perform well when the dependence is weak,
e.g.,
when the dependence among edges is confined to non-overlapping or overlapping blocks \citep{StSc20}.
The maximum pseudolikelihood estimator is defined as 
\beno
\mple
\= \argmax\limits_{\btheta\in\bTheta} \tilde\ell_{\zh}(\btheta),
\ee
where 
\beno
\tilde\ell_{\zh}(\btheta)
\= \dsum_{i<j}^n \log p_{\bta(\btheta, \zh)}(x_{i,j} \mid \bx_{-\{i,j\}}).
\ee
Here, 
$\bx_{-\{i,j\}}$ denotes $\bx$ excluding $x_{i,j}$,
and $p_{\bta(\btheta, \zh)}(x_{i,j} \mid \bx_{-\{i,j\}})$ denotes the conditional probability of $X_{i,j} = x_{i,j}$ given $\bX_{-\{i,j\}} = \bx_{-\{i,j\}}$.
Since the conditional distributions of $X_{i,j}$ given the rest of the random graph are Bernoulli distributions,
computing maximum pseudolikelihood estimators consumes less time than computing Monte Carlo maximum likelihood estimators.
}

\paragraph{Parallel computing}
In Step 1,
the maximization of the minorizing function amounts to $n$ quadratic programming problems,
which can be solved in parallel.
In Step 2,
the local dependence induced by the model implies that the contributions of the between- and within-block subgraphs to the loglikelihood function and its gradient and Hessian can be computed in parallel.
Hence both steps can be implemented in parallel, 
which suggests that the two-step likelihood-based method can be used on a massive scale as long as the blocks are not too large and multi-core computers or computing clusters are available.

\section{Simulation results}
\label{sec:sim_full}

\alert{We assess the performance of the two-step estimation approach by conducting multiple simulation studies,
with the number of nodes $n$ ranging from 30 to 10,000,
the number of blocks $K$ ranging from 3 to 100,
and the sizes of blocks ranging from 5 to 100:
\bi
\item[I.] A small-scale simulation study to compare the block recovery of the gold standard, 
the Bayesian approach of \citet{ScHa13}, 
to the two-step estimation approach.
Since the Bayesian approach is too time-consuming to be applied to large networks,
we use small networks with $n=30$ nodes and $K=3$ blocks.
The $K = 3$ blocks consist of 10 nodes (balanced case) or 5, 10, 15 nodes (unbalanced case).
\item[II.] A large-scale simulation study to assess the block recovery in Step 1 of the two-step estimation approach,
using $K = 25, 50, 75, 100$ blocks of size $25, 50, 75, 100$.
\item[III.] A large-scale simulation study to assess how the block recovery in Step 1 of the two-step estimation approach is affected by the sparsity of between-block subgraphs,
using $K = 25$ blocks of size $25$.
\item[IV.] A large-scale simulation study to assess the parameter recovery in Step 2 of the two-step estimation approach,
using $K = 25, 50, 75, 100$ blocks of size $25, 50, 75, 100$.
\ei}

\begin{figure}[h]
\centering
\includegraphics[width = .95 \textwidth, keepaspectratio]{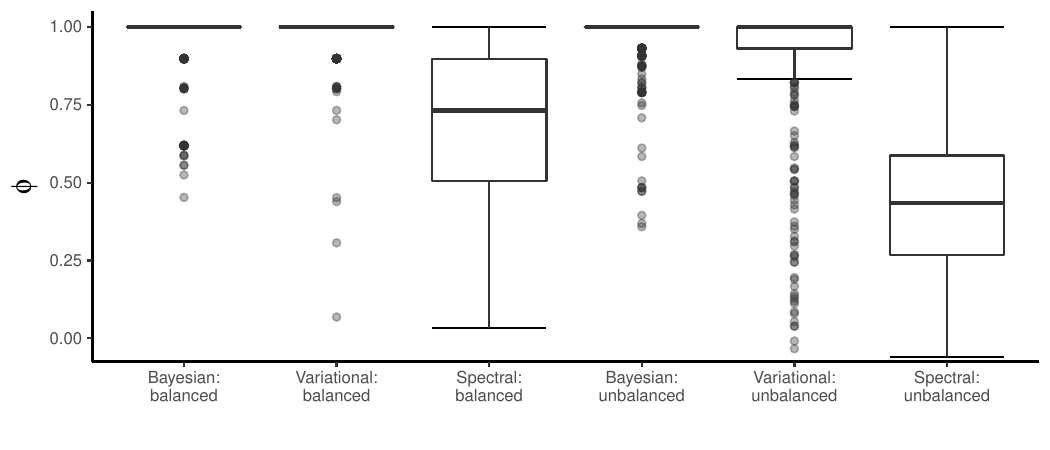}
\caption{Block recovery in terms of Yule's $\phi$-coefficient.
$500$ graphs with $n=30$ nodes and $K=3$ blocks with 10 nodes (balanced case) or 5, 10, 15 nodes (unbalanced case) were generated.
The figure compares the block recovery of the Bayesian approach,
the variational approach, 
and spectral clustering.
The Bayesian approach is the gold standard.
The variational approach is the default in Step 1 of the two-step estimation approach.
Spectral clustering is an alternative to the variational approach in Step 1.
\label{small_accuracy}}
\end{figure}
\begin{table}[h]
\centering
\begin{center}
\begin{tabular}{lccc}
\hline
                        & Bayesian approach & Variational approach & Spectral clustering\\
\hline
$n = 30$, $K = 3$, balanced      & 42,579.6 & 23.83      & .18 \\
\hline
$n = 30$, $K = 3$, unbalanced    & 32,480.9 & 24.00      & .18 \\
\hline
\end{tabular}
\s
\caption{\label{table:time}
Average computing time in seconds of the Bayesian approach,
the variational approach,
and spectral clustering.
$500$ graphs with $n=30$ nodes and $K=3$ blocks with 10 nodes (balanced case) or 5, 10, 15 nodes (unbalanced case) were generated.
The Bayesian approach is the gold standard.
The variational approach is the default in Step 1 of the two-step estimation approach.
Spectral clustering is an alternative to the variational approach in Step 1.
In Step 2 of the two-step estimation approach,
maximum pseudolikelihood estimates are computed.
The computing times of the variational approach and spectral clustering mentioned above are the total computing times,
including both Step 1 and Step 2.}
\end{center}
\end{table}

In Step 1 of the two-step estimation approach,
we use the variational approach described in Section \ref{sec:variational}.
We compare the variational approach to the spectral clustering method described in \citet{LeRi13}.
Spectral clustering is an alternative to the variational approach.
In Step 2 of the two-step estimation approach,
we use maximum pseudolikelihood estimators,
which are more scalable than Monte Carlo maximum likelihood estimators and facilitate simulation studies with up to 10,000 nodes.
In each scenario,
we generate 500 graphs from the model having between-block edge terms and within-block edge and transitive edge terms,
as described in Section \ref{sec:hergm}.
\alert{To select sensible values of the parameter vector $\btheta$,
note that the probabilistic behavior of random graphs with local dependence is sensitive to the choice of parameter values,
and so is the block recovery.
The same applies to stochastic block models: 
e.g.,
when the probabilities of edges within and between blocks are too low,
we may not be able to recover the block structure with high probability \citep{ZhZh16,Gaetal17}.
We have therefore selected the parameter vector $\btheta$ based on the following considerations:
We would like to ensure that,
with high probability, 
the model generates graphs that resemble real-world networks in terms of sufficient statistics $s(\bX)$.
In principle, 
we could select $\btheta$ by inspecting the expectation of $s(\bX)$.
The problem is that the expectation is not available in closed form.
We have therefore selected $\btheta$ based on simulating graphs,
and checking whether the simulated graphs resemble real-world networks.}

\begin{figure}[t]
\centering
\includegraphics[width = .48 \textwidth, height = .25 \textheight, keepaspectratio]{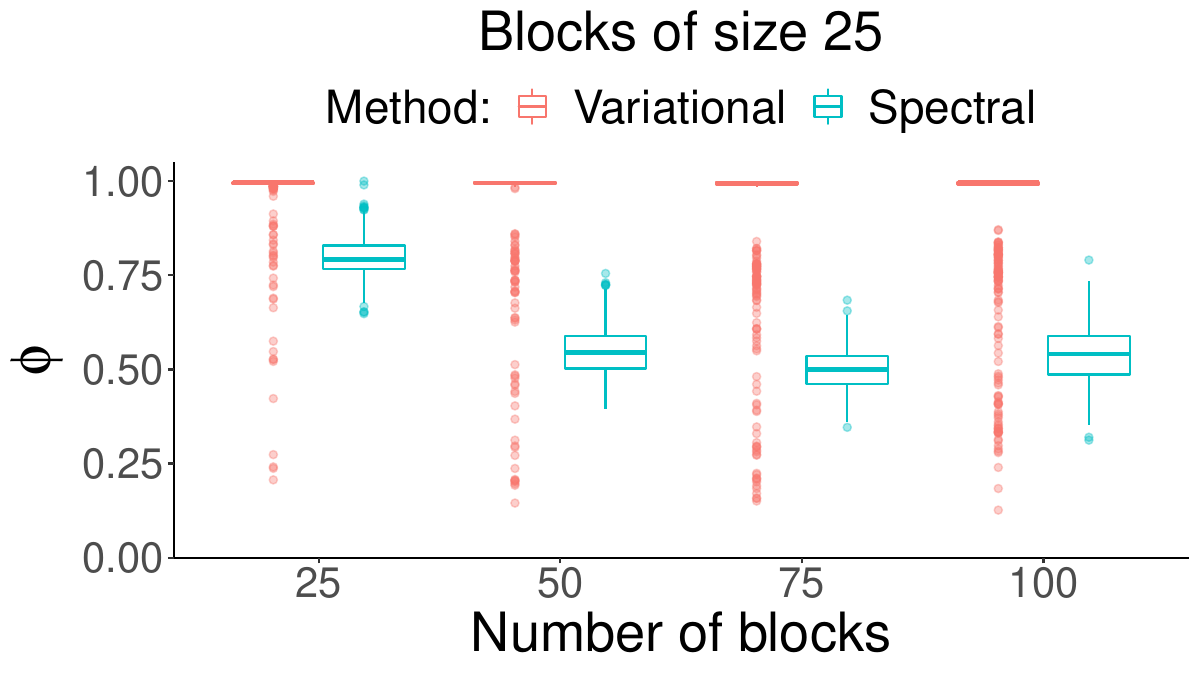} %
\includegraphics[width = .48 \textwidth, height = .25 \textheight, keepaspectratio]{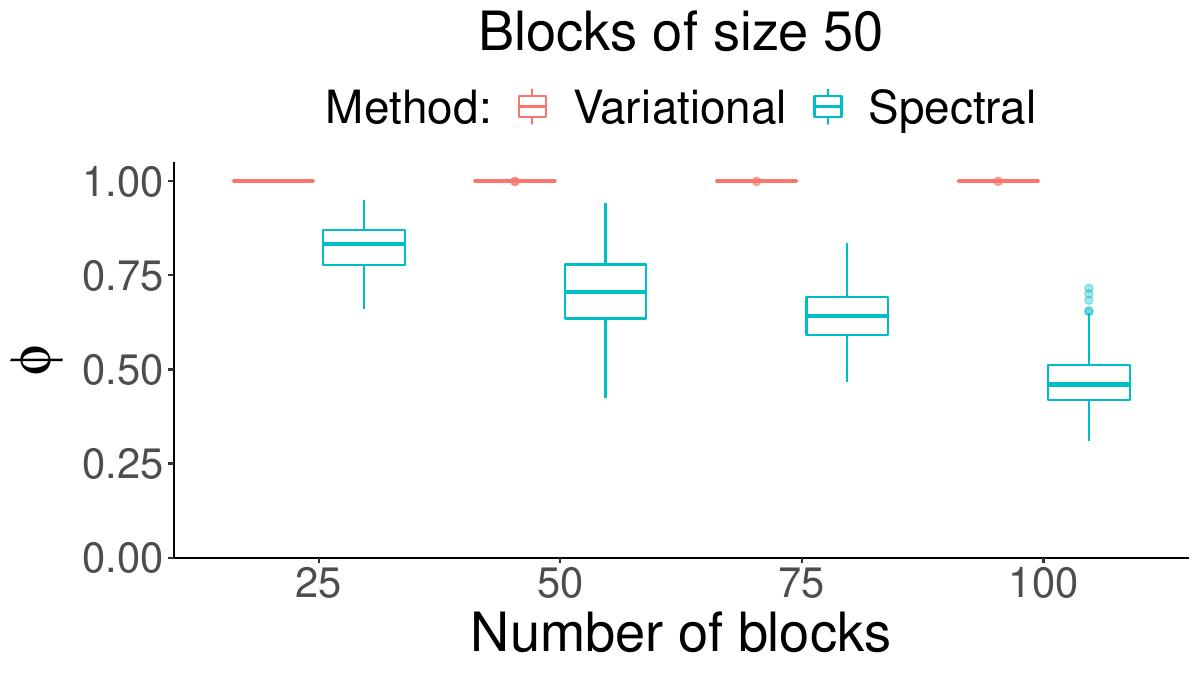} \s \\ 
\includegraphics[width = .48 \textwidth, height = .25 \textheight, keepaspectratio]{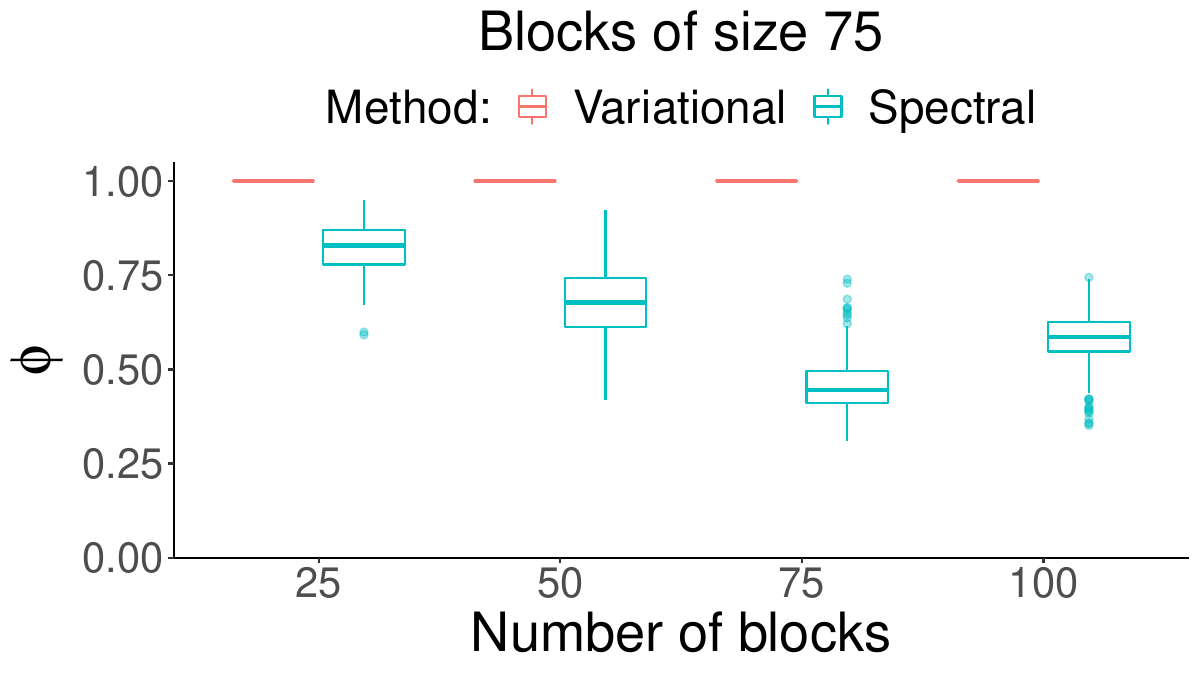} %
\includegraphics[width = .48 \textwidth, height = .25 \textheight, keepaspectratio]{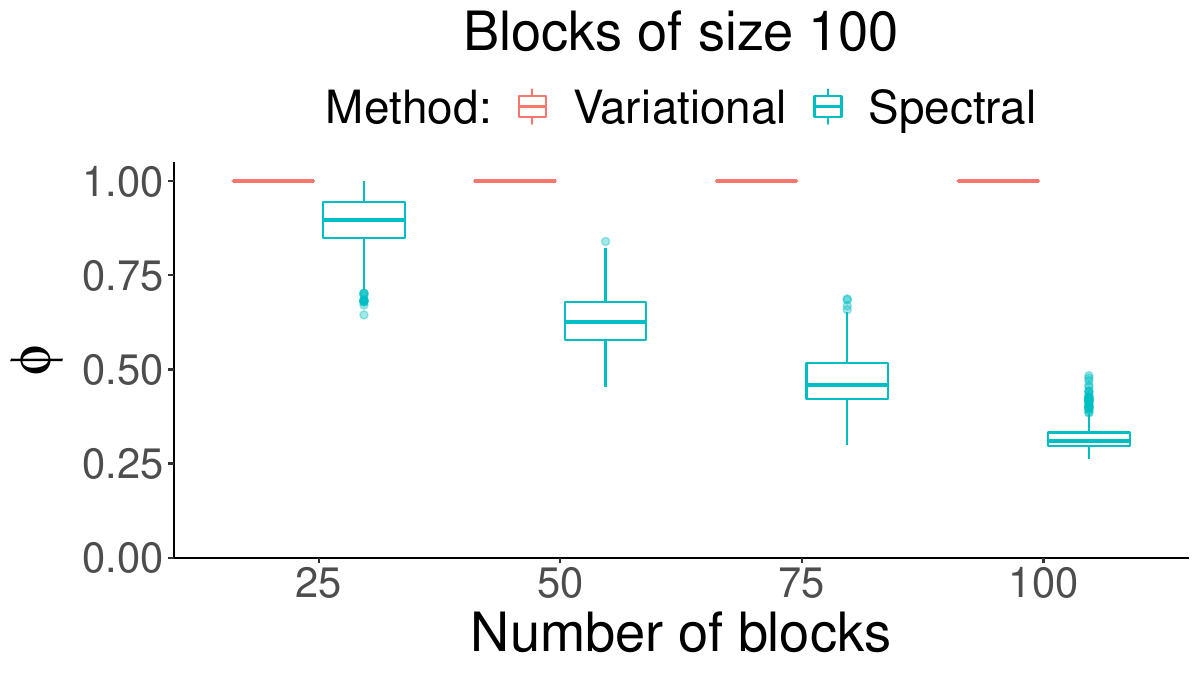}
\caption{Block recovery in terms of Yule's $\phi$-coefficient.
$500$ graphs with $K = 25, 50, 75, 100$ blocks of size $25, 50, 75, 100$ were generated from the model with between-block edge parameters $\theta_{B,k,l} = .5 - \log n$ ($l < k = 1, \dots, K$) and within-block edge and transitive edge parameters $\theta_{W,k,k,1} = -1.5$ and $\theta_{W,k,k,2} =  .5$ ($k = 1, \dots, K$).
The figure compares two alternative approaches to recovering the block structure in Step 1 of the two-step estimation approach,
the variational approach and spectral clustering.
\label{large_accuracy}
}
\end{figure}

In simulation study I,
to allow blocks of different sizes to have different parameters,
we use size-dependent between-block edge parameters $\theta_{B,k,l} = -.882 \log n$ ($l < k = 1, \dots, K$) and within-block edge  and transitive edge parameters $\theta_{W,k,k,1} = -.434\, \log n_k(\bz)$ and $\theta_{W,k,k,2} = .217 \, \log n_k(\bz)$ ($k = 1, \dots, K$),
where $n_k(\bz)$ is the size of block $k$ under $\bz \in \mbZ$.
The size-dependent parameterization is motivated by the sparsity of random graphs:
e.g.,
if edges $X_{i,j}$ are independent Bernoulli$(\mu)$ random variables,
it is tempting to believe that there exist constants $c > 0$ and $0 < \alpha < 1$ such that the expected number of edges of each node $(n - 1)\, \mu$ is bounded above by $c\; n^{\alpha}$,
because real-world networks are sparse.
As a consequence,
$\mu$ should be of order $n^{\theta}$ and $\eta = \logit(\mu)$ should be of order $\log n^{\theta} = \theta \log n$,
where $\theta = \alpha - 1 < 0$.
In more general models with edge terms as well as other model terms,
all model terms should scale as the edge term,
so that no model term can dominate any other model term.
These considerations suggest that parameters should scale with the log number of nodes.
In simulations studies II and IV,
the sizes of blocks are identical,
so we use between-block edge parameters $\theta_{B,k,l} = .5 - \log n$ ($l < k = 1, \dots, K$) and within-block edge and transitive edge parameters $\theta_{W,k,k,1} = -1.5$ and $\theta_{W,k,k,2} =  .5$ ($k = 1, \dots, K$).
In simulation study III,
we use the same within-block parameters,
but between-block edge parameters $\theta_{B,k,l} = .5 - \alpha\, \log n$ ($l < k = 1, \dots, K$),
with $\alpha$ varying from $.5$ and $1$.

\begin{figure}[t]
\centering
\includegraphics[width=.95\textwidth, keepaspectratio]{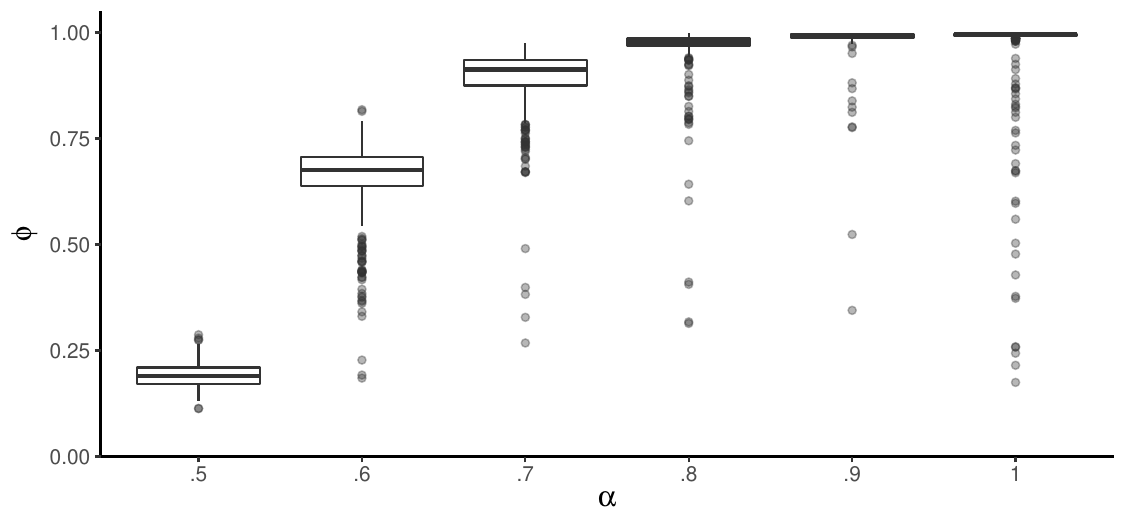}
\caption{Block recovery in terms of Yule's $\phi$-coefficient as a function of between-block subgraph sparsity.
$500$ graphs with $K = 25$ blocks of size $25$ were generated from the model with between-block edge parameters $\theta_{B,k,l} = .5 - \alpha \, \log n$ ($l < k = 1, \dots, K$) and within-block edge and transitive edge parameters $\theta_{W,k,k,1} = -1.5$ and $\theta_{W,k,k,2} =  .5$ ($k = 1, \dots, K$),
with $\alpha$ varying from $.5$ to $1$.
The figure shows that the more sparse between-block subgraphs are,
the more dense within-block subgraphs ``stand out,"
improving block recovery.
\label{betweenplot}
}
\end{figure}

\begin{figure}[t]
\centering
\includegraphics[width=.95\textwidth, keepaspectratio]{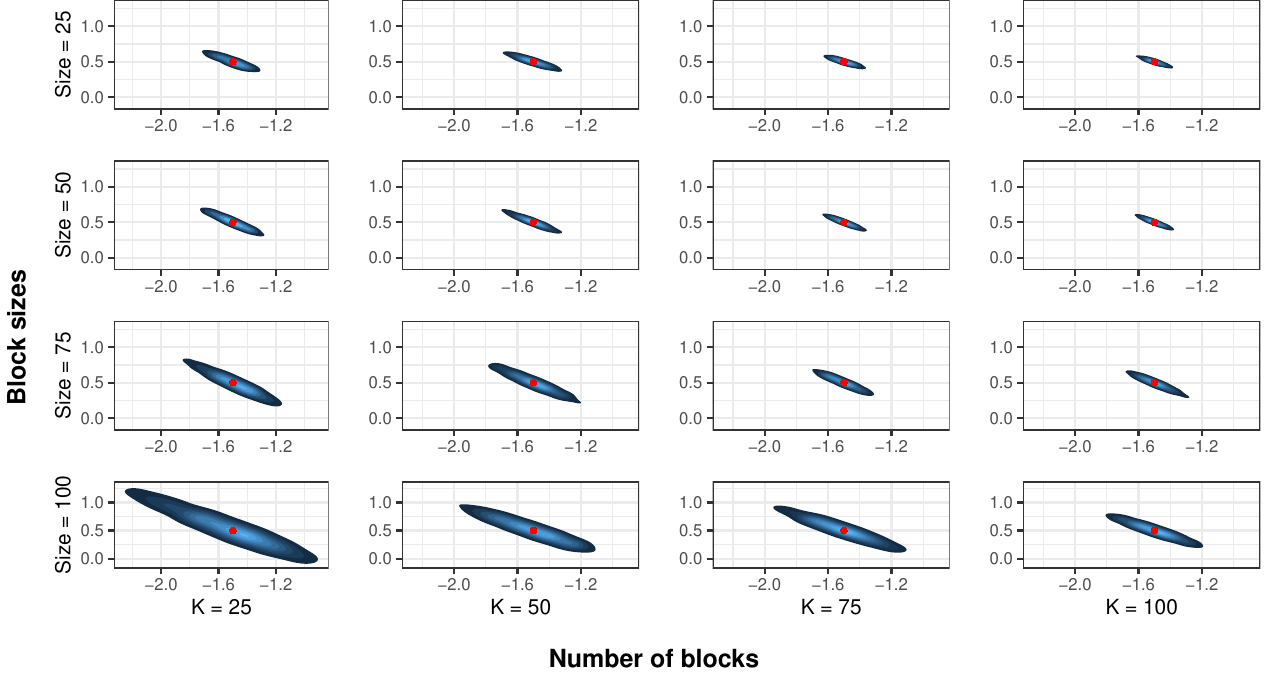}
\caption{Maximum pseudolikelihood estimates of within-block parameters.
$500$ graphs with $K = 25, 50, 75, 100$ blocks of size $25, 50, 75, 100$ were generated from the model with between-block edge parameters $\theta_{B,k,l} = .5 - \log n$ ($l < k = 1, \dots, K$) and within-block edge and transitive edge parameters $\theta_{W,k,k,1} = -1.5$ and $\theta_{W,k,k,2} =  .5$ ($k = 1, \dots, K$).
The horizontal axis corresponds to $\theta_{W,k,k,1} = -1.5$,
while the vertical axis corresponds to $\theta_{W,k,k,2} = .5$ ($k = 1, \dots, K$).
The red circle in each plot indicates the data-generating within-block parameter vector $(\theta_{W,k,k,1}, \theta_{W,k,k,2}) = (-1.5, .5)$.
\label{params}
}
\end{figure}

\hide{
\begin{figure}[t]
\centering
\includegraphics[width = .95 \textwidth, height = .3 \textheight, keepaspectratio]{alt_plot.pdf} %
\caption{Block recovery in terms of Yule's $\phi$-coefficient.
$500$ graphs with $K = 25, 50$ blocks of size $25$ were generated from the edge and transitive edge model with two different parameterizations of the within-block edge and transitive edge parameters. (Left) Results from the model with the between-block edge parameters $\theta_{B,k,l} = .5 - \log n$ ($l < k = 1, \dots, K$) and within-block edge parameters $\theta_{W,k,k,1} = -2$ and transitive edge parameters $\theta_{W,k,k,2} =  1$ ($k = 1, \dots, K$). (Right) Results from the model with the between-block edge parameters $\theta_{B,k,l} = .5 - \log n$ ($l < k = 1, \dots, K$) and within-block edge parameters $\theta_{W,k,k,1} = -1.5$ and transitive edge parameters $\theta_{W,k,k,2} = -.5$ ($k = 1, \dots, K$).
The figure compares how positive and negative parameterizations of transitivity affect the recovery of the block structure in Step 1 of the two-step estimation approach using the variational approach. 
\label{alt_param}
}
\end{figure}
}

In all scenarios,
we assess block recovery by using Yule's $\phi$-coefficient: 
\beno
\phi(\zs, \bz) 
&=& \dfrac{n_{0,0}\, n_{1,1} - n_{0,1}\, n_{1,0}}{\sqrt{(n_{0,0} + n_{0,1})\, (n_{1,0} + n_{1,1})\, (n_{0,0} + n_{1,0})\, (n_{0,1} + n_{1,1})}},
\ee
where
\beno
n_{a,b} 
&\equiv& n_{a,b}(\zs, \bz)
&=& \dsum_{i<j}^{n} \one(\one(\bz_i^\star = \bz_j^\star)=a)\; \one(\one(\bz_i=\bz_j)=b), 
& a, b \in \{0, 1\}.
\ee
Here,
$\one(.)$ is an indicator function,
which is $1$ if the statement in parentheses is true and is $0$ otherwise. 
\alert{The quantity $n_{0,0}(\zs, \bz)$ is the number of pairs of nodes that are assigned to distinct blocks under both $\zs$ and $\bz$,
while the quantity $n_{1,1}(\zs, \bz)$ is the number of pairs of nodes that are assigned to the same block under both $\zs$ and $\bz$.
The sum of the other two quantities, 
$n_{0,1}(\zs, \bz) + n_{1,0}(\zs, \bz)$,
 is the number of pairs of nodes on which $\zs$ and $\bz$ disagree.
If $\zs$ and $\bz$ agree on all pairs of nodes,
Yule's $\phi$-coefficient is $1$.
In fact,
Yule's $\phi$-coefficient is bounded above by $1$,
and is invariant to the labeling of the blocks.} 

The results of the small-scale simulation study I are shown in Figure~\ref{small_accuracy}.
The results suggest that the two-step estimation approach is almost as good as the Bayesian approach in terms of block recovery in the balanced case ($K = 3$ blocks of size $10$),
but is worse in the unbalanced case ($K = 3$ blocks of size $5, 10, 15$).
The worse performance in the unbalanced case may be due to the fact that there are smaller blocks in the unbalanced case than in the balanced case, 
and recovering small blocks is more challenging than recovering large blocks.
That said,
the advantage of the Bayesian approach over the two-step estimation approach in terms of block recovery in the unbalanced case is limited,
and comes at excessive costs:
Table \ref{table:time} reveals that the computing time of the Bayesian approach is thousands of times higher than the computing time of the two-step estimation approach.

The large-scale simulation studies II, III, and IV shed light on the performance of Steps 1 and 2 of the two-step estimation approach in large networks.
The results of simulation study II,
shown in Figure~\ref{large_accuracy},
reveal that Step 1 of the two-step estimation approach outperforms spectral clustering in terms of block recovery in most scenarios.
Simulation study III shows how the block recovery in Step 1 of the two-step estimation approach is affected by between-block subgraph sparsity.
According to Figure \ref{betweenplot},
the more sparse between-block subgraphs are,
the more dense within-block subgraphs ``stand out,"
improving block recovery.
Last, 
but not least, 
simulation study IV helps assess parameter recovery in Step 2 of the two-step estimation approach.
Figure \ref{params} shows that maximum pseudolikelihood estimates are close to the data-generating parameters,
and more so when the blocks are small and the number of blocks is large.

\section{Amazon product recommendation network}
\label{sec:application}

We use the two-step estimation approach to shed light on the structure of a large Amazon product recommendation network that is not captured by stochastic block models.
The data on the Amazon product recommendation network were collected by \citet{JaLe15} and can be downloaded from the website
\begin{center}
\url{http://snap.stanford.edu/data/com-Amazon.html}
\end{center}
The network consists of products listed at \url{www.amazon.com}.
Two products $i$ and $j$ are connected by an edge if $i$ and $j$ are frequently purchased together according to the ``Customers Who Bought This Item Also Bought" feature at \url{www.amazon.com}. 
Amazon assigns all products to categories,
which we consider to be ground-truth blocks. 
We use a subset of the network consisting of the top 500 non-overlapping categories with 10 to 80 products,
where the ranking of categories is based on \citet{JaLe15}. 
The resulting network consists of 10,448 products and 33,537 edges.

To model the Amazon product recommendation network,
we take advantage of curved exponential-family random graph models with within-block edge and geometrically weighted degree and edgewise shared partner terms \citep{SnPaRoHa04,HuHa04,HuGoHa08}.
The resulting models are more general than the motivating example used in Sections \ref{sec:hergm} and \ref{sec:approximation}---the model with between-block edge terms and within-block edge and transitive edge terms---and ensure that,
for each pair of products, 
the added value of additional edges and triangles within blocks decays.
In fact,
transitive edge terms are special cases of geometrically weighted edgewise shared partner terms,
and both of them are well-behaved alternatives to the ill-behaved triangle terms mentioned in Section \ref{sec:hergm}.
A full-fledged discussion of those models is beyond the scope of our paper.
We refer the interested reader to the seminal papers of \citet{SnPaRoHa04}, \citet{HuHa04}, and \citet{HuGoHa08}.

The natural parameters of the within-block edge terms are given by
\beno
\eta_{W,k,1}(\btheta, \bz) 
\= \theta_1\, \log n_k (\bz),
\ee
where the logarithmic term $\log n_k(\bz)$ arises from sparsity considerations and is a simple form of a size-dependent parameterization,
as explained in Section \ref{sec:sim_full}.
The within-block geometrically weighted degree terms are based on the number of products with $t$ edges in block $\mA_k$.
The natural parameters of within-block geometrically weighted degree terms are given by
\beno
\eta_{W,k,2,t}(\btheta, \bz) 
&=&\theta_{2}\, \log n_k (\bz)\, \exp(\theta_3)\, \left[1 - \left(1 - \exp(-\theta_3)\right)^t\right],
& t = 1, \dots, n_k(\bz) - 1.
\ee
The within-block geometrically weighted edgewise shared partner terms are based on the number of connected pairs of products $i$ and $j$ in block $\mA_k$ such that $i$ and $j$ have $t$ shared partners in block $\mA_k$.
The natural parameters of the within-block geometrically weighted edgewise shared partner terms are given by
\beno
\eta_{W,k,3,t}(\btheta, \bz) 
&=&\theta_{4}\, \log n_k (\bz)\, \exp(\theta_5)\, \left[1 - \left(1 - \exp(-\theta_5)\right)^t\right],
& t = 1, \dots, n_k(\bz) - 2.
\ee
To reduce computing time, 
it is convenient to truncate the two geometrically weighted model terms by setting $\eta_{W,k,2,t}(\btheta, \bz) = 0$ ($t = 21, \dots, n_k(\bz) - 1$) and $\eta_{W,k,3,t}(\btheta, \bz) = 0$ ($t = 13, \dots, n_k(\bz) - 2$).
The two thresholds $21$ and $13$ are motivated by the fact that no product has $21$ or more edges and less than 1\% of all pairs of products has $13$ or more edgewise shared partners.
Last,
but not least,
the natural parameters of the between-block edge terms are given by
\beno
\eta_{B,k,l}(\btheta, \bz) 
\= \theta_6\, \log n,
& l < k.
\ee
The resulting exponential family is a curved exponential family \citep{HuHa04},
because the natural parameter vector $\bta(\btheta, \bz)$ of the exponential family is a nonlinear function of $\btheta$ given $\bz \in \mbZ$.
The natural parameter vector $\bta(\btheta, \bz)$ depends on the sizes of blocks,
because we do not want to force small and large blocks to have the same natural parameters,
as explained in Section \ref{sec:sim_full}.

The curved exponential-family model specified above can capture an excess in the expected number of triangles within blocks relative to stochastic block models,
while ensuring that,
for each pair of products, 
the added value of additional edges and triangles within blocks decays \citep{StScBoMo18}.
An excess in the expected number of triangles within blocks relative to stochastic block models can arise when,
e.g.,
(a) three products are similar (e.g., three books on the same topic);
(b) three products are dissimilar but complement each other (e.g., a bicycle helmet, head light, and tail light);
and
(c) three products, 
either similar or dissimilar, 
were produced by the same source (e.g., three books written by the same author).

\begin{table}[t]
\centering
 \begin{tabular}{lrrrr}\hline
             Term & Estimate & S.E. & Estimate & S.E.  \\ \hline 
                         Within-block edges $\theta_1$           				     & $-.369$      &  $.002$  &-1.410      & $.008$	\\ \hline
                         Within-block degrees $\theta_2$ (base parameter)     	             & 	 	&              & .742	    & $.020$	\\ 
			 Within-block degrees $\theta_3$ (decay parameter)            	     &	 		& 		& .910	    &$.023$	\\ \hline
                         Within-block shared partners $\theta_4$ (base parameter) 	     & 	 	& 		& .303	   &$.005$	\\
			 Within-block shared partners $\theta_5$ (decay parameter)		     &	 		& 		&  1.106   	   & 	$.012$\\  \hline
 			 Between-block edges $\theta_6$ 			     &$-1.199$	& $.004$    &  $-1.199$  	   & 	$.004$ \\  \hline

 \end{tabular}
 \s
\caption{\label{table:params}
Monte Carlo maximum likelihood estimates and standard errors (S.E.) of $\theta_1, \dots, \theta_6$ estimated from the Amazon product recommendation network with 10,448 products; 
note that $\btheta = (\theta_1, \dots, \theta_6)$ should not be confused with the size-dependent natural parameter vector $\bta(\btheta, \bz)$. 
\alert{The parameters $\theta_1$ and $\theta_6$ are the base weights of the within- and between-block edge terms,
respectively.
The parameters $\theta_2$ and $\theta_4$ are the base weights of the within-block degree and edgewise shared partner terms,
whereas $\theta_3$ and $\theta_5$ control the rate of decay of the added value of additional edges and edgewise shared partners, 
respectively.}}
\end{table}

\begin{figure}[t]
\centering
    \includegraphics[scale=.41]{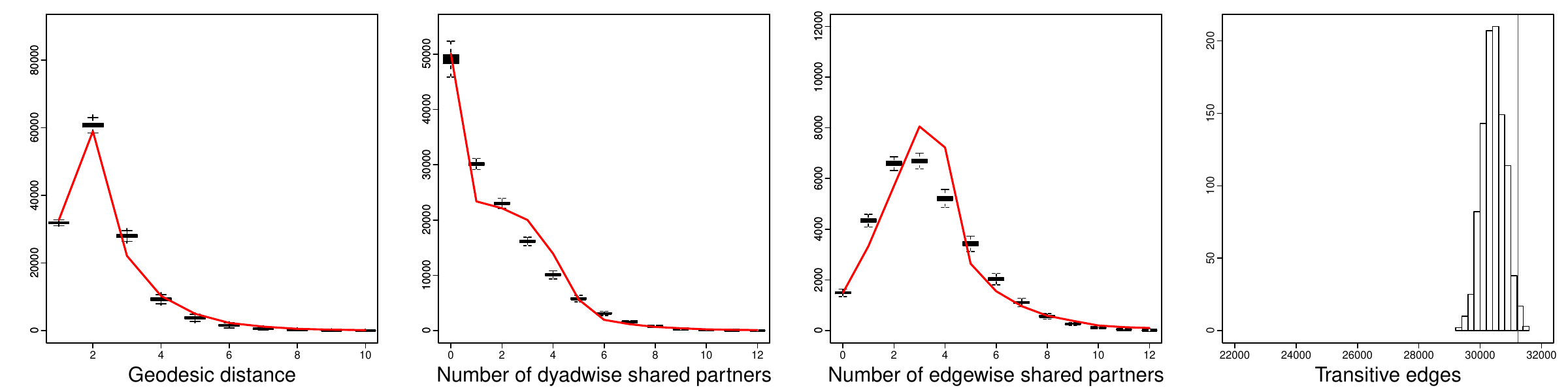}
\caption{Amazon product recommendation network with 10,448 products: goodness-of-fit of curved exponential-family random graph model.
The red lines indicate observed values of statistics.
\label{gof:ergm}
}

\includegraphics[scale=.41]{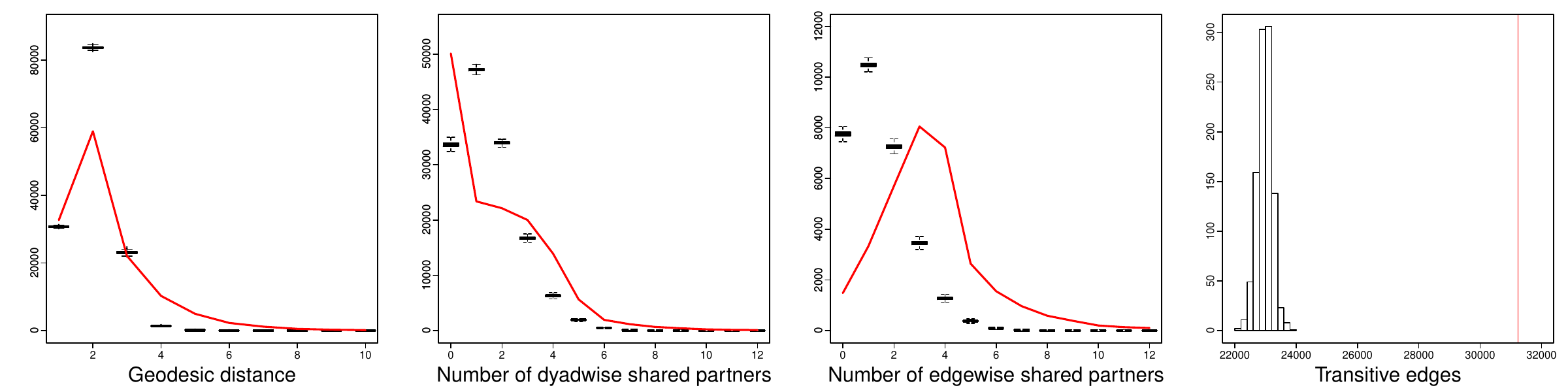}
\caption{Amazon product recommendation network with 10,448 products: goodness-of-fit of stochastic block models. 
The red lines indicate observed values of statistics.
\label{gof:sbm}
}
\end{figure}

Since we know the number of ground-truth blocks,
we set $K=500$ and estimate the block structure by using the two-step estimation approach,
using the variational approach in Step 1 and Monte Carlo maximum likelihood estimates in Step 2.
To assess the performance of the two-step estimation approach in terms of block recovery,
we use Yule's $\phi$-coefficient.
Yule's $\phi$-coefficient turns out to be $.964$,
which indicates near-perfect recovery of the ground-truth block structure.
\alert{An inspection of the estimated block structure reveals that,
out of the 500 categories of products in the Amazon product recommendation network, 
products in 15 categories are misclassified.
Some of the small categories are merged with large categories,
while some unusual products of large blocks are misclassified as well.
Some of the products are unusual in the sense of having few edges to other products in the same category,
while others are unusual in the sense of having many edges to other products in the same category.}

The Monte Carlo maximum likelihood estimates and standard errors of $\theta_1, \dots, \theta_6$ are shown in Table \ref{table:params}. 
\alert{The parameters $\theta_1, \dots, \theta_6$ of the geometrically weighted terms can be interpreted in terms of log odds of conditional probabilities of edges,
given all other edges \citep{Hu08}.
A work-out example can be found in \citet{StScBoMo18}.}
Table \ref{table:params} suggests that there is evidence for transitivity.
The observed tendency towards transitivity has advantages in practice: 
\alert{It enables Amazon to recommend co-purchases of brandnew products and existing products,
even when there are no data on past co-purchases of these products.
For example,
when a brandnew product $i$ is introduced (e.g., a novel) and product $i$ is known to be related to existing product $j$ (e.g., a novel by the same author),
and product $j$ tends to be co-purchased with product $k$ (e.g., a classic novel),
then Amazon could recommend co-purchases of products $i$ and $j$ as well as products $i$ and $k$,
despite the fact that there are no data on past co-purchases of products $i$ and $k$ and there is no direct connection between products $i$ and $k$ (although there in indirect connection via product $j$).} 

To demonstrate that the curved exponential-family random graph model considered here can capture structural features of networks that simpler models---such as stochastic block models---cannot capture,
we compare the goodness-of-fit of the curved exponential-family random graph model to the goodness-of-fit of stochastic block models.
Since the two models impose the same probability law on between-block subgraphs,
it is natural to compare the two models in terms of goodness-of-fit with respect to within-block subgraphs.
We assess the goodness-of-fit of the two models in terms of the within-block geodesic distances of pairs of products,
i.e., 
the length of the shortest path between pairs of products in the same block;
the numbers of within-block dyadwise shared partners,
i.e.,
the number of unconnected or connected pairs of products with $i$ shared partners in the same block;
 the numbers of within-block edgewise shared partners,
i.e.,
the number of connected pairs of products with $i$ shared partners in the same block;
and the number of within-block transitive edges,
i.e., 
the number of pairs of products with at least one shared partner in the same block.
Figures \ref{gof:ergm} and \ref{gof:sbm} compare the goodness-of-fit of the two models based on 1,000 graphs simulated from the estimated models.
The figures suggest that 
the curved exponential-family random graph model considered here is superior to the stochastic block model in terms of both connectivity and transitivity.

\section{Discussion}
\label{sec:discussion}

The two-step estimation approach proposed here enables large-scale estimation of models with local dependence and unknown block structure provided that the number of blocks $K$ is known.
An important direction of future research are methods for selecting $K$ when $K$ is unknown.
We note that even in the special case of stochastic block models,
the issue of selecting $K$ has not received much attention,
with the notable exception of recent work by \citet{Sa15}, \citet{Wa15}, and others.
Whether---and how---the developed methods can be extended to models with local dependence is an open question,
but having scalable methods for selecting $K$ would doubtless be useful in practice.

\alert{The proposed methods are implemented in {\tt R} packages {\tt hergm} \citep{ScLu15}.
A stable and user-friendly version will be released in the near future.}

\subsection*{Acknowledgements}

We acknowledge support from the National Science Foundation (NSF) in the form of NSF awards DMS-1513644 (SB, JS, MS) and DMS-1812119 (JS, MS).

\newpage

\setcounter{page}{1}

\begin{center}
\LARGE
{Supplement:

\longtitle\s\s
\\
\normalsize
{\large 
Sergii Babkin\hspace{2cm}Jonathan R.\ Stewart\hspace{2cm}Xiaochen Long\hspace{2cm}Michael Schweinberger
}
}
\end{center}

\section*{Proofs of theoretical results}

We prove Theorems \ref{t.concentration2} and \ref{t.concentration} of Section \ref{sec:approximation}.
To prove them,
we need three additional results,
Lemma \ref{l.concentration} and Propositions \ref{p.concentration} and \ref{p.expectation}.
To state them,
let
\beno
g(\bx; \btheta, \bz) 
&=& \log p_{\bta(\btheta,\bz)}(\bx) - \log p_{\bta(\btheta_1,\btheta_2=\bm{0},\bz)}(\bx),
\ee
where $g(\bx; \btheta, \bz)$ is considered as a function of $\bx \in \mbX$ for fixed $(\btheta, \bz) \in \bTheta \times \mbZ$.
Observe that the expectation $\mbE\, s(\bX)$ of the sufficient statistic vector $s(\bX)$ exists as long as the data-generating natural parameter vector $\bta(\bthetas, \bz^\star)$ is contained in the interior of the natural parameter space of the exponential family \citep[][Theorem 2.2, pp.\ 34--35]{Br86}.
As a consequence,
the expectations $\mbE \log p_{\bta(\btheta, \bz)}(\bX)$ and $\mbE\, g(\bX; \btheta, \bz)$ exist,
because
\beno
\mbE \log p_{\bta(\btheta, \bz)}(\bX)
\= \langle\bta(\btheta, \bz),\, \mbE\, s(\bX)\rangle - \psi(\bta(\btheta, \bz))
\ee
and
\beno
\mbE\, g(\bX; \btheta, \bz)
\hide{
&=& \mbE\, \left[\mbE \log p_{\bta(\btheta, \bz)}(\bX) - \log p_{\bta(\btheta_1,\btheta_2=\bm{0},\bz)}(\bX)\right]
}
&=& \mbE \log p_{\bta(\btheta, \bz)}(\bX) - \mbE \log p_{\bta(\btheta_1,\btheta_2=\bm{0},\bz)}(\bX).
\ee

We first state Lemma \ref{l.concentration} and Propositions \ref{p.concentration} and \ref{p.expectation} and then prove Theorems \ref{t.concentration2} and \ref{t.concentration}.

\begin{lemma}
\label{l.concentration}
Consider a model with local dependence.
Let $f: \mX \times \mbZ \mapsto \mbR$ be a function of within-block edge variables $(X_{i,j})_{i < j:\, \bz_i = \bz_j}^n$ that is Lipschitz with respect to the Hamming metric $d: \mX \times \mX \mapsto \{0, 1, 2, \dots\}$ with Lipschitz coefficient $\norm{f}_{\lip} > 0$ and $\mbE\, f(\bX; \bz) < \infty$.
Then there exists a universal constant $c > 0$ such that,
for all $\bz \in \mbZ$ and all $t > 0$,
\beno
\mbP \left(|f(\bX; \bz) - \mbE\, f(\bX; \bz)|\; \geq\; t\right)
&\lte& 2\, \exp\left(- \dfrac{t^2}{c\, K\, m(\bz)^2\, \norm{\mA}_\infty^4 \, \norm{f}^2_{\lip}}\right).
\ee
\end{lemma}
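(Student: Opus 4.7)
The plan is to use the independence structure guaranteed by local dependence to re-express $\bX$ as a vector of independent coordinates, and then apply McDiarmid's bounded-differences inequality.

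My first move would be to invoke local dependence under the data-generating partition $\zs$. By the definition of local dependence, the within-$\zs$-neighborhood subgraphs $\bX_{1,1},\dots,\bX_{K,K}$ are mutually independent of each other, and every between-$\zs$-neighborhood edge $X_{i,j}$ (that is, with $z^\star_{i,k}\,z^\star_{j,k}=0$ for all $k$) is independent of every other edge variable. I would therefore regard $\bX$ as a vector whose coordinates are the $K$ within-$\zs$-neighborhood subgraphs together with the individual between-$\zs$-neighborhood edges; by local dependence, these coordinates are jointly mutually independent.

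Next I would bound the McDiarmid constants using the Hamming Lipschitz property of $f$. Perturbing one within-$\zs$-neighborhood coordinate alters at most $\binom{\|\mA\|_\infty}{2}$ entries of $\bx$, so it shifts $f$ by at most $\tfrac{1}{2}\|f\|_{\lip}\,\|\mA\|_\infty^{2}$. Perturbing a single between-$\zs$-neighborhood edge coordinate alters at most one entry and hence shifts $f$ by at most $\|f\|_{\lip}$; however, because $f$ depends only on within-$\bz$-neighborhood edges, such a coordinate contributes zero unless $\bz_i=\bz_j$, and the number of such contributing coordinates is bounded by $K\binom{n_{\max}(\bz)}{2}\le \tfrac{1}{2}K\,n_{\max}(\bz)^{2}$. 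Summing the squared bounded-differences constants,
$$\sum_i c_i^2 \;\le\; \tfrac{1}{4}K\,\|f\|_{\lip}^{2}\,\|\mA\|_\infty^{4} \;+\; \tfrac{1}{2}K\,n_{\max}(\bz)^{2}\,\|f\|_{\lip}^{2} \;\le\; c\,K\,n_{\max}(\bz)^{2}\,\|\mA\|_\infty^{4}\,\|f\|_{\lip}^{2},$$
where the last step uses $n_{\max}(\bz)\ge 1$ and $\|\mA\|_\infty\ge 1$ to fold both terms into a single expression. McDiarmid's bounded-differences inequality applied to the independent coordinates then delivers the stated two-sided Gaussian tail, the factor $2$ in McDiarmid's exponent being absorbed into $c$.

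The only genuine obstacle is the dual bookkeeping: the distributional independence is dictated by $\zs$, while the effective support of $f$ is dictated by $\bz$. Within-$\bz$-neighborhood edges that straddle distinct $\zs$-neighborhoods are between-$\zs$ edges under the data-generating law, so they must be counted as independent singleton coordinates with their own bounded-differences contribution rather than being absorbed into the within-$\zs$ blocks. Keeping these two partitions straight is precisely what produces the factor $n_{\max}(\bz)^{2}$ sitting alongside $\|\mA\|_\infty^{4}$ in the final bound.
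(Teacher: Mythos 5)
Your argument is correct. The paper itself omits this proof, deferring to Proposition 1 of Schweinberger and Stewart (2017), and that argument is essentially the bounded-differences route you take: local dependence under $\zs$ makes the $K$ within-neighborhood subgraphs and the individual between-neighborhood edges mutually independent coordinates, and McDiarmid's inequality with your coordinate-wise constants yields the stated tail with the factor $2$ absorbed into $c$. Your handling of the $\zs$-versus-$\bz$ bookkeeping --- treating within-$\bz$ edges that are between-$\zs$ edges as independent singleton coordinates, which is what places $n_{\max}(\bz)^2$ alongside $\norm{\mA}_\infty^4$ in the bound --- is precisely the adaptation the lemma requires beyond the cited result, and you have it right.
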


\llproof \ref{l.concentration}.
The proof of Lemma \ref{l.concentration} resembles the proof of Proposition 1 of \citet{ScSt16} step by step and is therefore omitted.

\s
 
\begin{proposition}
\label{p.concentration}
Consider a model with local dependence satisfying conditions [C.1] and [C.2].
Choose $\epsilon \in (0, 1)$ as small as desired.
Then there exists a universal constant $c>0$ such that
\beno\mbP\left(\max\limits_{\bz \in \mbS} |g(\bX; \btheta, \bz) - \mbE\, g(\bX; \btheta, \bz)|\; \geq\; c\; \sqrt{-\log\dfrac{\epsilon}{2} + n \log K}\; \sqrt{K}\; m(\mbS)^2\; \norm{\mA}_\infty^2\; \log n\right)
\;\leq\; \epsilon.
\ee
\end{proposition}

\pproof \ref{p.concentration}.
To show that the probability mass of $g(\bX; \btheta, \bz)$ concentrates around its expectation $\mbE\, g(\bX; \btheta, \bz)$,
observe that the Lipschitz coefficient of the function $g: \mX \times \bTheta \times \mbZ \mapsto \mbR$ with respect to the Hamming metric $d: \mX \times \mX \mapsto \{0, 1, 2, \dots\}$ is defined by
\beno
\norm{g}_{Lip} 
&=& \underset{(\bx_1,\bx_2)\, \in\, \mbX \times \mbX:\; d(\bx_1,\bx_2)>0}{\sup} \dfrac{|g(\bx_1; \btheta, \bz)-g(\bx_2; \btheta, \bz)|}{d(\bx_1,\bx_2)}.
\ee
Since the term $\psi(\bta(\btheta, \bz)) - \psi(\bta(\btheta_1, \btheta_2\hspace{-.1cm}=\hspace{-.1cm}\bm{0}, \bz))$ of $g(\bx_1; \btheta, \bz)$ and $g(\bx_2; \btheta, \bz)$ cancels,
we obtain
\beno
\dfrac{|g(\bx_1; \btheta, \bz)-g(\bx_2; \btheta, \bz)|}{d(\bx_1,\bx_2)} 
\hide{
&=& \dfrac{| \left\langle\bta(\btheta,\bz) - \bta(\btheta_1, \btheta_2\hspace{-.1cm}=\hspace{-.1cm}\bm{0}, \bz), s(\bx_1)\right\rangle - \left\langle\bta(\btheta, \bz) - \bta_0(\btheta,\bz), s(\bx_2)\right\rangle|}{d(\bx_1,\bx_2)}\s
\\
}
&=& \dfrac{|\langle\bta(\btheta,\bz) - \bta(\btheta_1, \btheta_2\hspace{-.1cm}=\hspace{-.1cm}\bm{0}, \bz), s(\bx_1) - s(\bx_2)\rangle|}{d(\bx_1,\bx_2)}.
\ee
By condition [C.1] and the fact that $\bta(\btheta,\bz) - \bta(\btheta_1, \btheta_2\hspace{-.1cm}=\hspace{-.1cm}\bm{0}, \bz) \in \mR^{\qqq}$,
there exists $c_0 > 0$ such that
\beno
\dfrac{|g(\bx_1; \btheta, \bz) - g(\bx_2; \btheta, \bz)|}{d(\bx_1,\bx_2)} 
= \dfrac{|\langle\bta(\btheta,\bz) - \bta(\btheta_1, \btheta_2\hspace{-.1cm}=\hspace{-.1cm}\bm{0}, \bz), s(\bx_1) - s(\bx_2)\rangle|}{d(\bx_1,\bx_2)}
\leq c_0\, m(\bz) \log n.
\ee
Thus,
the Lipschitz coefficient of the function $g: \mbX \times \bTheta \times \mbZ$ is bounded above by
\beno
\norm{g}_{Lip}
&\leq& c_0\; m(\bz)\, \log n 
&\leq& c_0\; m(\mbS)\, \log n 
& \mbox{for all}
& \bz \in \mbS.
\ee
By applying Lemma \ref{l.concentration} to the function $g: \mbX \times \bTheta \times \mbZ$ of within-block edges with Lipschitz coefficient $\norm{g}_{Lip} \leq c_0\; m(\mbS)\, \log n$,
we can conclude that there exists $c > 0$ such that,
for all $t > 0$,
\beno
\mbP\left(|g(\bX; \btheta, \bz) - \mbE\, g(\bX; \btheta, \bz)|\; \geq\; t\right)
\lte 
\hide{
2\, \exp\left(- \dfrac{t^2}{c^2\, K\, m(\mbS)^2\, \norm{\mA}_\infty^4\, m(\mbS)^2}\right)\s
\\
\= 
}
2\, \exp\left(- \dfrac{t^2}{c^2\, K\; m(\mbS)^4\; \norm{\mA}_\infty^4\; (\log n)^2}\right).
\ee
A union bound over the $|\mbS| \leq K^n$ block structures shows that
\beno
\mbP\left(\max\limits_{\bz \in \mbS} |g(\bX; \btheta, \bz) - \mbE\, g(\bX; \btheta, \bz)| \geq t\right)
\lte 2\, \exp\left(- \dfrac{t^2}{c^2\, K\; m(\mbS)^4\; \norm{\mA}_\infty^4\; (\log n)^2} + n \log K\right).
\ee
Choose $\epsilon \in (0, 1)$ as small as desired and let 
\beno
t
\= c\; \sqrt{-\log\dfrac{\epsilon}{2} + n \log K}\; \sqrt{K}\; m(\mbS)^2\; \norm{\mA}_\infty^2\; \log n
&>& 0,
\ee
where $c > 0$.
Then 
\beno
\mbP\left(\max\limits_{\bz \in \mbS} |g(\bX; \btheta, \bz) - \mbE\, g(\bX; \btheta, \bz)|\; \geq\; c\; \sqrt{-\log\dfrac{\epsilon}{2} + n \log K}\; \sqrt{K}\; m(\mbS)^2\; \norm{\mA}_\infty^2\; \log n\right) 
\;\leq\; \epsilon.
\ee

\hide{

\newpage

Remark:
We want
\beno
2\, \exp\left(- \dfrac{t^2}{c^2\, K\; m(\mbS)^4\; \norm{\mA}_\infty^4\; (\log n)^2} + n \log K\right)
\lte \epsilon
\ee
or
\beno
- \dfrac{t^2}{c^2\, K\; m(\mbS)^4\; \norm{\mA}_\infty^4\; (\log n)^2} + n \log K
\lte \log\dfrac{\epsilon}{2}
\ee
or
\beno
\dfrac{t^2}{c^2\, K\; m(\mbS)^4\; \norm{\mA}_\infty^4\; (\log n)^2} - n \log K
\gte -\log\dfrac{\epsilon}{2}
\ee
or
\beno
\dfrac{t^2}{c^2\, K\; m(\mbS)^4\; \norm{\mA}_\infty^4\; (\log n)^2}
\gte -\log\dfrac{\epsilon}{2} + n \log K
\ee
or
\beno
t^2
\gte \left(-\log\dfrac{\epsilon}{2} + n \log K\right)\; c^2\, K\; m(\mbS)^4\; \norm{\mA}_\infty^4\; (\log n)^2
\ee
or
\beno
t
\gte c\; \sqrt{-\log\dfrac{\epsilon}{2} + n \log K}\; \sqrt{K}\; m(\mbS)^2\; \norm{\mA}_\infty^2\; \log n.
\ee

\newpage

}

\s

\begin{proposition}
\label{p.expectation}
Consider a model with local dependence satisfying conditions [C.1] and [C.2].
Then there exists a universal constant $c > 0$ such that
\beno
\max\limits_{\bz \in \mbS} \left|\mbE\, g(\bX; \btheta, \bz)\right|
\lte c\; K\; m(\mbS)^2\; \log n.
\ee
\end{proposition}

\pproof \ref{p.expectation}.
By definition,
\beno
\mbE\, g(\bX; \btheta, \bz)
\= \mbE \log p_{\bta(\btheta, \bz)}(\bX) - \mbE \log p_{\bta(\btheta_1,\btheta_2=\bm{0},\bz)}(\bX),
\ee
where,
for all $(\btheta, \bz) \in \bTheta \times \mZ$,
\beno
p_{\bta(\btheta, \bz)} (\bx)
&=& \dprod_{k=1}^K p_{\bta_{W,k}(\btheta, \bz)}(\bx_{k,k})\; \dprod_{l=1}^{k-1}\;\, \dprod_{i, j:\; z_{i,k} = 1,\; z_{j,l} = 1}^n\, p_{\bta(\btheta, \bz)}(x_{i,j}),
&& \bx \in \mbX.
\ee
By construction of $p_{\bta(\btheta, \bz)}(\bx)$ and $p_{\bta(\btheta_1, \btheta_2=\bm{0}, \bz)}(\bx)$,
the contributions of between-block subgraphs to the loglikelihood function are the same under both models,
hence the expectation of the loglikelihood ratio reduces to the expectation of the loglikelihood ratio of within-block subgraphs:
\beno
\mbE \log p_{\bta(\btheta, \bz)}(\bX) - \mbE \log p_{\bta(\btheta_1,\btheta_2=\bm{0},\bz)}(\bX)
= \dsum_{k=1}^K \left[\mbE \log p_{\bta(\btheta,\bz)}(\bX_{k,k}) - \mbE \log p_{\bta(\btheta_1,\btheta_2=\bm{0},\bz)}(\bX_{k,k})\right].
\ee
By the triangle inequality,
\beno
\label{g_w}
\left|\mbE \log p_{\bta(\btheta, \bz)}(\bX) - \mbE \log p_{\bta(\btheta_1,\btheta_2=\bm{0},\bz)}(\bX)\right|
\leq \dsum_{k=1}^K \left|\mbE \log p_{\bta(\btheta,\bz)}(\bX_{k,k}) - \mbE \log p_{\bta(\btheta_1,\btheta_2=\bm{0},\bz)}(\bX_{k,k})\right|.
\ee
The terms $|\mbE \log p_{\bta(\btheta,\bz)}(\bX_{k,k}) - \mbE \log p_{\bta(\btheta_1,\btheta_2=\bm{0},\bz)}(\bX_{k,k})|$ can be bounded above by invoking the triangle inequality:
\beno
\left|\mbE \log p_{\bta(\btheta,\bz)}(\bX_{k,k}) - \mbE \log p_{\bta(\btheta_1,\btheta_2=\bm{0},\bz)}(\bX_{k,k})\right|
\lte |\langle\bta_{k,k}(\btheta_{k,k}, \bz) - \bta_{k,k}(\btheta_{k,k,0}, \bz),\, \mbE\, s_{k,k}(\bX)\rangle|\s
\\
&+& |\psi_{k,k}(\bta_{k,k}(\btheta_{k,k}, \bz)) - \psi_{k,k}(\bta_{k,k}(\btheta_{k,k,0}, \bz))|,
\ee
where $\bta_{k,k}(\btheta_{k,k}, \bz)$, $s_{k,k}(\bx)$, and $\psi_{k,k}(\bta_{k,k}(\btheta_{k,k}, \bz))$ are the natural parameter vector, the sufficient statistics vector, and the log-normalizing constant of $p_{\bta(\btheta,\bz)}(\bX_{k,k})$,
and $\btheta_{k,k,0} = (\theta_{W,k,1}, \btheta_{W,k,2}=\bm{0})$. 
We bound the two terms on the right-hand side of the inequality above one by one.

\s

{\em First term.} 
By condition [C.2],
there exist $c_1 > 0$ and $c_2 > 0$ such that
\beno
|\langle\bta_{k,k}(\btheta_{k,k}, \bz) - \bta_{k,k}(\btheta_{k,k,0}, \bz),\, \mbE\, s_{k,k}(\bX)\rangle|
\lte c_1\; \norm{\btheta_{k,k} - \btheta_{k,k,0}}\; m(\bz)^2\, \log n\s
\\
\lte c_2\; m(\mbS)^2\; \log n,
\ee
where the last inequality follows from the assumption that $\bTheta_{k,k}$ is compact.

\s

{\em Second term.} 
By the mean-value theorem along with classic exponential-family properties,
there exists $\dot\bta_{k,k} = \alpha\, \bta_{k,k}(\btheta_{k,k}, \bz) + (1 - \alpha)\, \bta_{k,k}(\btheta_{k,k,0}, \bz)$ ($0 < \alpha < 1$) such that
\beno
|\psi_{k,k}(\bta_{k,k}(\btheta_{k,k}, \bz)) - \psi_{k,k}(\bta_{k,k}(\btheta_{k,k,0}, \bz))|
\hide{
\\
\= |\langle\bta_{k,k}(\btheta_{k,k}, \bz) - \bta_{k,k}(\btheta_{k,k,0}, \bz),\; \nabla_{\bta_{k,k}}\, \psi_{k,k}(\bta_{k,k}(\btheta_{k,k}, \bz))|_{\bta_{k,k} = \dot\bta_{k,k}}\rangle|\s
\\
}
\;=\; |\langle\bta_{k,k}(\btheta_{k,k}, \bz) - \bta_{k,k}(\btheta_{k,k,0}, \bz),\; \mbE_{\dot\bta_{k,k}}\, s_{k,k}(\bX)\rangle|.
\ee
It is worth noting that the existence of $\dot\bta_{k,k}$ is ensured by the fact that the natural parameter space of exponential families is convex \citep[Theorem 1.13,][p.\ 19]{Br86},
which implies that an element $\dot\bta_{k,k}$ in the interior of the natural parameter exists as long as $\bta_{k,k}(\btheta_{k,k}, \bz)$ and $\bta_{k,k}(\btheta_{k,k,0}, \bz)$ are contained in the interior of the natural parameter space.
Therefore,
the second term can be bounded along the same lines as the first term,
which implies that there exist $c_3 > 0$ such that
\beno
|\psi_{k,k}(\bta_{k,k}(\btheta_{k,k}, \bz)) - \psi_{k,k}(\bta_{k,k}(\btheta_{k,k,0}, \bz))|
\lte c_3\; m(\mbS)^2\, \log n.
\ee

{\em Conclusion.}
Collecting terms shows that that there exist $c > 0$ such that
\hide{
\beno
|\mbE\; g(\bX; \btheta, \bz)|
\hide{
&\leq& \dsum_{k=1}^K |\langle\bta_{k,k}(\btheta_{k,k}, \bz) - \bta_{k,k}(\btheta_{k,k}, \bz),\, \mbE\, s_{k,k}(\bX)\rangle|\s
\\
&+& \dsum_{k=1}^K |\psi_{k,k}(\bta_{k,k}(\btheta_{k,k}, \bz)) - \psi_{k,k}(\bta_{k,k}(\btheta_{k,k}, \bz))|\s
\\
}
\lte c\; K\; m(\mbS)^2\; \log n
\ee
and
}
\beno
\max\limits_{\bz \in \mbS} |\mbE\; g(\bX; \btheta, \bz)|
\lte c\, K\, m(\mbS)^2\, \log n.
\ee

\s

Armed with Propositions \ref{p.concentration} and \ref{p.expectation},
we can prove Theorem \ref{t.concentration2} and \ref{t.concentration}.

\s

\ttproof \ref{t.concentration2}.
Theorem \ref{t.concentration2} is an application of Theorem \ref{t.concentration}.
To prove Theorem \ref{t.concentration2},
all we need to do verify conditions [C.1] and [C.2] of Theorem \ref{t.concentration}.

\s

{\em Condition [C.1].}
By the triangle inequality,
\beno
&& \left|\left\langle\bta(\btheta, \bz),\, s(\bx_1) - s(\bx_2)\right\rangle\right|
\;=\; \left|\dsum_{k \leq l}^K \left\langle\bta_{k,l}(\btheta),\, s_{k,l}(\bx_1) - s_{k,l}(\bx_2)\right\rangle\right|\s
\\
\lte \dsum_{k \leq l}^K \left|\left\langle\bta_{k,l}(\btheta),\, s_{k,l}(\bx_1) - s_{k,l}(\bx_2)\right\rangle\right|
\;\leq\; \dsum_{l < k}^K \left|\theta_{B,k,l}\right|\, |s_{B,k,l}(\bx_1) - s_{B,k,l}(\bx_2)|\s
\\
&+& \dsum_{k = 1}^K \left|\theta_{W,k,1} \right|\, |s_{W,k,1}(\bx_1) - s_{W,k,1}(\bx_2)|
\;+\; \dsum_{k = 1}^K \left|\theta_{W,k,2} \right|\, |s_{W,k,2}(\bx_1) - s_{W,k,2}(\bx_2)|.
\hide{
\\
\lte c_1\, d(\bx_1, \bx_2) + 2\, c_2\, d(\bx_1, \bx_2)\; m(\bz)\s
\\
&\leq& \max(c_1, 2\, c_2)\; d(\bx_1, \bx_2)\; m(\bz)
}
\ee
The first term is bounded above by $c_1\, d(\bx_1, \bx_2)$,
whereas the other two terms are bounded above by $c_2\, d(\bx_1, \bx_2)\; m(\bz)$,
which implies that
\beno
\left|\left\langle\bta(\btheta, \bz),\, s(\bx_1) - s(\bx_2)\right\rangle\right|
\lte c_1\, d(\bx_1, \bx_2) + 2\, c_2\, d(\bx_1, \bx_2)\; m(\bz)\s
\\
&\leq& \max(c_1,\, 2\, c_2)\; d(\bx_1, \bx_2)\; m(\bz)
\ee
by the compactness of $\bTheta_{k,l}$ ($k \leq l = 1, \dots, K$).
Thus,
condition [C.1] is satisfied.

\s

{\em Condition [C.2].}
Condition [C.2] is satisfied,
because
\beno
&& \left|\left\langle\bta_{k,l}(\btheta_{k,l,1}, \bz) - \bta_{k,l}(\btheta_{k,l,2}, \bz),\, \mbE_{\bta(\btheta, \bz)}\, s_{k,l}(\bX)\right\rangle\right|\s
\\
\lte |\theta_{1,k,l,1} - \theta_{1,k,l,2}|\, \mbE_{\bta(\btheta, \bz)}\, s_{B,k,l}(\bX) + |\theta_{2,k,l,1} - \theta_{2,k,l,2}|\; \mbE_{\bta(\btheta, \bz)}\, s_{2,k,l}(\bX)\s
\\
\lte |\theta_{1,k,l,1} - \theta_{1,k,l,2}|\; m(\bz)^2 + |\theta_{2,k,l,1} - \theta_{2,k,l,2}|\; m(\bz)^2\s
\\
\= \norm{\btheta_{k,l,1} - \btheta_{k,l,2}}_1\; m(\bz)^2
\;\leq\; \sqrt{2}\; \norm{\btheta_{k,l,1} - \btheta_{k,l,2}}\; m(\bz)^2,
\ee
where $\norm{.}_1$ denotes the $\ell_1$-norm whereas $\norm{.} \equiv \norm{.}_2$ denotes the $\ell_2$-norm of the vector of interest.

\s

\ttproof \ref{t.concentration}.
\hide{
Observe that
\beno
|g(\bX; \btheta, \bz)|
\lte |g(\bX; \btheta, \bz) - \mbE\, g(\bX; \btheta, \bz)| + |\mbE\, g(\bX; \btheta, \bz)|.
\ee
}
Observe that,
for all $t > 0$,
\beno
&& \mbP\left(\max\limits_{\bz\in\mbS} |g(\bX; \btheta, \bz)|\; \geq\; t\right)\s
\\
\lte \mbP\left(\max\limits_{\bz\in\mbS} |g(\bX; \btheta, \bz) - \mbE\, g(\bX; \btheta, \bz)| + \max\limits_{\bz\in\mbS} |\mbE\, g(\bX; \btheta, \bz)|\; \geq\; t\right)\s
\\
\lte \mbP\left(\max\limits_{\bz\in\mbS} |g(\bX; \btheta, \bz) - \mbE\, g(\bX; \btheta, \bz)|\; \geq\; \dfrac{t}{2}\right) + \mbP\left(\max\limits_{\bz\in\mbS} |\mbE\, g(\bX; \btheta, \bz)|\; \geq\; \dfrac{t}{2}\right).
\ee
Choose $\epsilon \in (0, 1)$ as small as desired and let $c > 0$.
\hide{

\newpage

Remark: Choose
\beno
t
\= 2\, c\; \sqrt{-\log\dfrac{\epsilon}{2} + n \log K}\; \sqrt{K}\; m(\mbS)^2\; \norm{\mA}_\infty^2\; \log n
\ee
or
\beno
\dfrac{t}{2}
\= c\; \sqrt{-\log\dfrac{\epsilon}{2} + n \log K}\; \sqrt{K}\; m(\mbS)^2\; \norm{\mA}_\infty^2\; \log n.
\ee

\newpage

}
Then,
by using
\beno
t
\= 2\, c\; \sqrt{-\log\dfrac{\epsilon}{2} + n \log K}\; \sqrt{K}\; m(\mbS)^2\; \norm{\mA}_\infty^2\; \log n 
&>& 0
\ee
along with Propositions \ref{p.concentration} and \ref{p.expectation},
we can conclude that
\beno
\mbP\left(\max\limits_{\bz\in\mbS} |g(\bX; \btheta, \bz)|\; \geq\; 2\, c\; \sqrt{-\log\dfrac{\epsilon}{2} + n \log K}\; \sqrt{K}\; m(\mbS)^2\; \norm{\mA}_\infty^2\; \log n\right)
\lte \epsilon.
\ee


\begin{thebibliography}{73}
\newcommand{\enquote}[1]{``#1''}
\expandafter\ifx\csname natexlab\endcsname\relax\def\natexlab#1{#1}\fi

\bibitem[{Amini et~al.(2013)Amini, Chen, Bickel, and Levina}]{AmChBiLe13}
Amini, A.~A., Chen, A., Bickel, P.~J.,  and Levina, E. (2013),
  \enquote{Pseudo-likelihood methods for community detection in large sparse
  networks,} \textit{The Annals of Statistics}, 41, 2097--2122.

\bibitem[{Atchade et~al.(2013)Atchade, Lartillot, and Robert}]{AtLaRo12}
Atchade, Y.~F., Lartillot, N.,  and Robert, C. (2013), \enquote{Bayesian
  computation for statistical models with intractable normalizing constants,}
  \textit{Brazilian Journal of Probability and Statistics}, 27, 416--436.

\bibitem[{{Besag}(1974)}]{Bj74}
{Besag}, J. (1974), \enquote{Spatial interaction and the statistical analysis
  of lattice systems,} \textit{Journal of the Royal Statistical Society, Series
  B}, 36, 192--225.

\bibitem[{Bickel and Chen(2009)}]{BiCh09}
Bickel, P.~J.,  and Chen, A. (2009), \enquote{A nonparametric view of network
  models and {N}ewman-{G}irvan and other modularities,} in \textit{Proceedings
  of the {N}ational {Academy} of {S}ciences}, Vol. 106, pp. 21068--21073.

\bibitem[{Bickel et~al.(2011)Bickel, Chen, and Levina}]{BiChLe11}
Bickel, P.~J., Chen, A.,  and Levina, E. (2011), \enquote{The method of moments
  and degree distributions for network models,} \textit{The Annals of
  Statistics}, 39, 2280--2301.

\bibitem[{Bickel et~al.(2013)Bickel, Choi, Chang, and Zhang}]{BiChChZh13}
Bickel, P.~J., Choi, D., Chang, X.,  and Zhang, H. (2013), \enquote{Asymptotic
  normality of maximum likelihood and its variational approximation for
  stochastic blockmodels,} \textit{The Annals of Statistics}, 41, 1922--1943.

\bibitem[{Binkiewicz et~al.(2017)Binkiewicz, Vogelstein, and Rohe}]{BiVoRo17}
Binkiewicz, N., Vogelstein, J.~T.,  and Rohe, K. (2017),
  \enquote{Covariate-assisted spectral clustering,} \textit{Biometrika}, 104,
  361--377.

\bibitem[{{Brown}(1986)}]{Br86}
{Brown}, L. (1986), \textit{Fundamentals of Statistical Exponential Families:
  With Applications in Statistical Decision Theory}, Hayworth, CA, USA:
  Institute of Mathematical Statistics.

\bibitem[{Byshkin et~al.(2018)Byshkin, Stivala, Mira, Robins, and
  Lomi}]{Byetal18}
Byshkin, M., Stivala, A., Mira, A., Robins, G.,  and Lomi, A. (2018),
  \enquote{Fast maximum likelihood estimation via equilibrium expectation for
  large network data,} \textit{Scientific Reports}, 8, 2045--2322.

\bibitem[{Caimo and Friel(2011)}]{CaFr09}
Caimo, A.,  and Friel, N. (2011), \enquote{Bayesian inference for exponential
  random graph models,} \textit{Social Networks}, 33, 41--55.

\bibitem[{Celisse et~al.(2012)Celisse, Daudin, and Pierre}]{CeDaLa11}
Celisse, A., Daudin, J.~J.,  and Pierre, L. (2012), \enquote{Consistency of
  maximum-likelihood and variational estimators in the stochastic block model,}
  \textit{Electronic Journal of Statistics}, 6, 1847--1899.

\bibitem[{Chatterjee and Diaconis(2013)}]{ChDi11}
Chatterjee, S.,  and Diaconis, P. (2013), \enquote{Estimating and understanding
  exponential random graph models,} \textit{The Annals of Statistics}, 41,
  2428--2461.

\bibitem[{Choi et~al.(2012)Choi, Wolfe, and Airoldi}]{ChWoAi12}
Choi, D.~S., Wolfe, P.~J.,  and Airoldi, E.~M. (2012), \enquote{Stochastic
  blockmodels with growing number of classes,} \textit{Biometrika}, 99,
  273--284.

\bibitem[{Comets(1992)}]{Co92}
Comets, F. (1992), \enquote{On consistency of a class of estimators for
  exponential families of Markov random fields on the lattice,} \textit{The
  Annals of Statistics}, 20, 455--468.

\bibitem[{Daudin et~al.(2008)Daudin, Picard, and Robin}]{DaPiRo08}
Daudin, J.~J., Picard, F.,  and Robin, S. (2008), \enquote{A mixture model for
  random graphs,} \textit{Statistics and Computing}, 18, 173--183.

\bibitem[{Erd\H{o}s and R\'enyi(1959)}]{ErRe59}
Erd\H{o}s, P.,  and R\'enyi, A. (1959), \enquote{On random graphs,}
  \textit{Publicationes Mathematicae}, 6, 290--297.

\bibitem[{Erd\H{o}s and R\'enyi(1960)}]{ErRe60}
--- (1960), \enquote{{On the evolution of random graphs},} \textit{Publications
  of the Mathematical Institute of the Hungarian Academy of Sciences}, 5,
  17--61.

\bibitem[{Fienberg(2012)}]{Fien12}
Fienberg, S.~E. (2012), \enquote{A brief history of statistical models for
  network analysis and open challenges,} \textit{Journal of Computational and
  Graphical Statistics}, 21, 825--839.

\bibitem[{Frank and {Strauss}(1986)}]{FoSd86}
Frank, O.,  and {Strauss}, D. (1986), \enquote{Markov graphs,} \textit{Journal
  of the American Statistical Association}, 81, 832--842.

\bibitem[{Gao et~al.(2017)Gao, Ma, Zhang, and Zhou}]{Gaetal17}
Gao, C., Ma, Z., Zhang, A.~Y.,  and Zhou, H.~H. (2017), \enquote{Achieving
  optimal misclassification proportion in stochastic block models,}
  \textit{Journal of Machine Learning Research}, 18, 1980--2024.

\bibitem[{{H\"aggstr\"om} and {Jonasson}(1999)}]{HaJo99}
{H\"aggstr\"om}, O.,  and {Jonasson}, J. (1999), \enquote{Phase transition in
  the random triangle model,} \textit{Journal of Applied Probability}, 36,
  1101--1115.

\bibitem[{{Handcock}(2003)}]{Ha03}
{Handcock}, M.~S. (2003), \enquote{Assessing degeneracy in statistical models
  of social networks,} Tech. rep., Center for Statistics and the Social
  Sciences, University of Washington, www.csss.washington.edu/Papers.

\bibitem[{{Handcock} et~al.(2007){Handcock}, {Raftery}, and
  {Tantrum}}]{HaRaTa07}
{Handcock}, M.~S., {Raftery}, A.~E.,  and {Tantrum}, J.~M. (2007),
  \enquote{Model-based clustering for social networks,} \textit{Journal of the
  Royal Statistical Society, Series A (with discussion)}, 170, 301--354.

\bibitem[{Harris(2013)}]{Ha13}
Harris, J.~K. (2013), \textit{An Introduction to Exponential Random Graph
  Modeling}, Thousand Oaks, California: Sage.

\bibitem[{Hoff(2020)}]{Ho18}
Hoff, P.~D. (2020), \enquote{Additive and multiplicative effects network
  models,} \textit{Statistical Science}, to appear.

\bibitem[{{Hoff} et~al.(2002){Hoff}, {Raftery}, and {Handcock}}]{HpRaHm01}
{Hoff}, P.~D., {Raftery}, A.~E.,  and {Handcock}, M.~S. (2002), \enquote{Latent
  space approaches to social network analysis,} \textit{Journal of the American
  Statistical Association}, 97, 1090--1098.

\bibitem[{{Holland} and {Leinhardt}(1970)}]{HpLs70}
{Holland}, P.~W.,  and {Leinhardt}, S. (1970), \enquote{A method for detecting
  structure in sociometric data,} \textit{American Journal of Sociology}, 76,
  492--513.

\bibitem[{{Holland} and {Leinhardt}(1972)}]{HpLs72}
--- (1972), \enquote{Some evidence on the transitivity of positive
  interpersonal sentiment,} \textit{American Journal of Sociology}, 77,
  1205--1209.

\bibitem[{{Holland} and {Leinhardt}(1976)}]{HpLs76}
--- (1976), \enquote{Local structure in social networks,} \textit{Sociological
  Methodology}, 1--45.

\bibitem[{Hummel et~al.(2012)Hummel, {Hunter}, and {Handcock}}]{HuHuHa12}
Hummel, R.~M., {Hunter}, D.~R.,  and {Handcock}, M.~S. (2012),
  \enquote{Improving simulation-based algorithms for fitting {ERGM}s,}
  \textit{Journal of Computational and Graphical Statistics}, 21, 920--939.

\bibitem[{{Hunter}(2007)}]{Hu08}
{Hunter}, D.~R. (2007), \enquote{Curved exponential family models for social
  networks,} \textit{Social Networks}, 29, 216--230.

\bibitem[{{Hunter} et~al.(2008){Hunter}, {Goodreau}, and {Handcock}}]{HuGoHa08}
{Hunter}, D.~R., {Goodreau}, S.~M.,  and {Handcock}, M.~S. (2008),
  \enquote{Goodness of fit of social network models,} \textit{Journal of the
  American Statistical Association}, 103, 248--258.

\bibitem[{{Hunter} and {Handcock}(2006)}]{HuHa04}
{Hunter}, D.~R.,  and {Handcock}, M.~S. (2006), \enquote{Inference in curved
  exponential family models for networks,} \textit{Journal of Computational and
  Graphical Statistics}, 15, 565--583.

\bibitem[{Hunter et~al.(2012)Hunter, Krivitsky, and Schweinberger}]{HuKrSc12}
Hunter, D.~R., Krivitsky, P.~N.,  and Schweinberger, M. (2012),
  \enquote{Computational statistical methods for social network models,}
  \textit{Journal of Computational and Graphical Statistics}, 21, 856--882.

\bibitem[{Hunter and Lange(2004)}]{HuLa04}
Hunter, D.~R.,  and Lange, K. (2004), \enquote{A {t}utorial on {MM}
  {a}lgorithms,} \textit{The American Statistician}, 58, 30--38.

\bibitem[{Jin and Liang(2013)}]{Jin2013p927}
Jin, I.~H.,  and Liang, F. (2013), \enquote{Fitting social network models using
  varying truncation stochastic approximation {MCMC} algorithm,}
  \textit{Journal of Computational and Graphical Statistics}, 22, 927--952.

\bibitem[{{Jonasson}(1999)}]{Jo99}
{Jonasson}, J. (1999), \enquote{The random triangle model,} \textit{Journal of
  Applied Probability}, 36, 852--876.

\bibitem[{Kolaczyk(2009)}]{Ko09a}
Kolaczyk, E.~D. (2009), \textit{Statistical Analysis of Network Data: Methods
  and Models}, New York: Springer-Verlag.

\bibitem[{Krivitsky(2017)}]{Kr17}
Krivitsky, P.~N. (2017), \enquote{Using contrastive divergence to seed {M}onte
  {C}arlo {MLE} for exponential-family random graph models,}
  \textit{Computational Statistics and Data Analysis}, 107, 149--161.

\bibitem[{Lazega and Snijders(2016)}]{multilevelnetwork}
Lazega, E.,  and Snijders, T. A.~B. (eds.) (2016), \textit{Multilevel Network
  Analysis for the Social Sciences}, Switzerland: Springer-Verlag.

\bibitem[{Lei and Rinaldo(2015)}]{LeRi13}
Lei, J.,  and Rinaldo, A. (2015), \enquote{Consistency of spectral clustering
  in stochastic block models,} \textit{The Annals of Statistics}, 43, 215--237.

\bibitem[{Liang et~al.(2016)Liang, Jin, Song, and Liu}]{LiJi16}
Liang, F., Jin, I.~H., Song, Q.,  and Liu, J.~S. (2016), \enquote{An adaptive
  exchange algorithm for sampling from distributions with intractable
  normalizing constants,} \textit{Journal of the American Statistical
  Association}, 111, 377--393.

\bibitem[{Lusher et~al.(2013)Lusher, Koskinen, and Robins}]{ergm.book}
Lusher, D., Koskinen, J.,  and Robins, G. (2013), \textit{Exponential Random
  Graph Models for Social Networks}, Cambridge, UK: Cambridge University Press.

\bibitem[{Mele(2017)}]{Me17}
Mele, A. (2017), \enquote{A structural model of dense network formation,}
  \textit{Econometrica}, 85, 825--850.

\bibitem[{{Nowicki} and {Snijders}(2001)}]{NkSt01}
{Nowicki}, K.,  and {Snijders}, T. A.~B. (2001), \enquote{Estimation and
  prediction for stochastic blockstructures,} \textit{Journal of the American
  Statistical Association}, 96, 1077--1087.

\bibitem[{Okabayashi and Geyer(2012)}]{OkGe12}
Okabayashi, S.,  and Geyer, C.~J. (2012), \enquote{Long range search for
  maximum likelihood in exponential families,} \textit{Electronic Journal of
  Statistics}, 6, 123--147.

\bibitem[{Priebe et~al.(2012)Priebe, Sussman, Tang, and
  Vogelstein}]{PrSuTaVo12}
Priebe, C.~E., Sussman, D.~L., Tang, M.,  and Vogelstein, J.~T. (2012),
  \enquote{Statistical inference on errorfully observed graphs,}
  \textit{Journal of the American Statistical Association}, 107, 1119--1128.

\bibitem[{Rohe et~al.(2011)Rohe, Chatterjee, and Yu}]{RoChYu11}
Rohe, K., Chatterjee, S.,  and Yu, B. (2011), \enquote{Spectral clustering and
  the high-dimensional stochastic block model,} \textit{The Annals of
  Statistics}, 39, 1878--1915.

\bibitem[{Rohe et~al.(2014)Rohe, Qin, and Fan}]{RoQiFa13}
Rohe, K., Qin, T.,  and Fan, H. (2014), \enquote{The highest-dimensional
  stochastic block model with a regularized estimator,} \textit{Statistica
  Sinica}, 24, 1771--1786.

\bibitem[{Saldana et~al.(2017)Saldana, Yu, and Feng}]{Sa15}
Saldana, D.~F., Yu, Y.,  and Feng, Y. (2017), \enquote{How many communities are
  there?} \textit{Journal of Computational and Graphical Statistics}, 26,
  171--181.

\bibitem[{Salter-Townshend et~al.(2012)Salter-Townshend, White, Gollini, and
  Murphy}]{latentspace}
Salter-Townshend, M., White, A., Gollini, I.,  and Murphy, T.~B. (2012),
  \enquote{Review of statistical network analysis: models, algorithms, and
  software,} \textit{{S}tatistical {A}nalysis and {D}ata {M}ining}, 5,
  243--264.

\bibitem[{Schweinberger(2011)}]{Sc09b}
Schweinberger, M. (2011), \enquote{Instability, sensitivity, and degeneracy of
  discrete exponential families,} \textit{Journal of the American Statistical
  Association}, 106, 1361--1370.

\bibitem[{Schweinberger(2020)}]{Sc17}
--- (2020), \enquote{Consistent structure estimation of exponential-family
  random graph models with block structure,} \textit{Bernoulli}, 26,
  1205--1233.

\bibitem[{{Schweinberger} and {Handcock}(2015)}]{ScHa13}
{Schweinberger}, M.,  and {Handcock}, M.~S. (2015), \enquote{Local dependence
  in random graph models: characterization, properties and statistical
  inference,} \textit{Journal of the Royal Statistical Society, Series B}, 77,
  647--676.

\bibitem[{Schweinberger et~al.(2020)Schweinberger, Krivitsky, Butts, and
  Stewart}]{ScKrBu17}
Schweinberger, M., Krivitsky, P.~N., Butts, C.~T.,  and Stewart, J. (2020),
  \enquote{Exponential-family models of random graphs: Inference in finite,
  super, and infinite population scenarios,} \textit{Statistical Science}, to
  appear.

\bibitem[{Schweinberger and Luna(2018)}]{ScLu15}
Schweinberger, M.,  and Luna, P. (2018), \enquote{{HERGM}: Hierarchical
  exponential-family random graph models,} \textit{Journal of Statistical
  Software}, 85, 1--39.

\bibitem[{Schweinberger and Stewart(2020)}]{ScSt16}
Schweinberger, M.,  and Stewart, J. (2020), \enquote{Concentration and
  consistency results for canonical and curved exponential-family models of
  random graphs,} \textit{The Annals of Statistics}, 48, 374--396.

\bibitem[{Sewell and Chen(2015)}]{sewell2015latent}
Sewell, D.~K.,  and Chen, Y. (2015), \enquote{Latent space models for dynamic
  networks,} \textit{Journal of the American Statistical Association}, 110,
  1646--1657.

\bibitem[{Smith et~al.(2019)Smith, Asta, and Calder}]{SmAsCa19}
Smith, A.~L., Asta, D.~M.,  and Calder, C.~A. (2019), \enquote{The geometry of
  continuous latent space models for network data,} \textit{Statistical
  Science}, 34, 428--453.

\bibitem[{Snijders(2002)}]{Sn02}
Snijders, T. A.~B. (2002), \enquote{Markov chain {M}onte {C}arlo estimation of
  exponential random graph models,} \textit{Journal of Social Structure}, 3,
  1--40.

\bibitem[{Snijders(2007)}]{Sn07a}
--- (2007), \enquote{Contribution to the discussion of {H}andcock, {M.S.},
  {R}aftery, {A.E.}, and {J.M.} {T}antrum, {M}odel-based clustering for social
  networks,} \textit{Journal of the Royal Statistical Society, Series A}, 170,
  322--324.

\bibitem[{{Snijders} et~al.(2006){Snijders}, {Pattison}, {Robins}, and
  {Handcock}}]{SnPaRoHa04}
{Snijders}, T. A.~B., {Pattison}, P.~E., {Robins}, G.~L.,  and {Handcock},
  M.~S. (2006), \enquote{New specifications for exponential random graph
  models,} \textit{Sociological Methodology}, 36, 99--153.

\bibitem[{Stewart and Schweinberger(2020)}]{StSc20}
Stewart, J.,  and Schweinberger, M. (2020), \enquote{Scalable inference for
  random graphs with dependent edges in high-dimensional settings using
  pseudolikelihood,} Tech. rep., Department of Statistics, Rice University.

\bibitem[{Stewart et~al.(2019)Stewart, Schweinberger, Bojanowski, and
  Morris}]{StScBoMo18}
Stewart, J., Schweinberger, M., Bojanowski, M.,  and Morris, M. (2019),
  \enquote{Multilevel network data facilitate statistical inference for curved
  {ERGM}s with geometrically weighted terms,} \textit{Social Networks}, 59,
  98--119.

\bibitem[{{Strauss}(1986)}]{St86}
{Strauss}, D. (1986), \enquote{On a general class of models for interaction,}
  \textit{SIAM Review}, 28, 513--527.

\bibitem[{{Strauss} and {Ikeda}(1990)}]{StIk90}
{Strauss}, D.,  and {Ikeda}, M. (1990), \enquote{Pseudolikelihood estimation
  for social networks,} \textit{Journal of the American Statistical
  Association}, 85, 204--212.

\bibitem[{Tan and Friel(2018)}]{TaFr18}
Tan, L. S.~L.,  and Friel, N. (2018), \enquote{Bayesian variational inference
  forexponential random graph models,} ArXiv:1811.04249.

\bibitem[{Thiemichen and Kauermann(2017)}]{ThKa17}
Thiemichen, S.,  and Kauermann, G. (2017), \enquote{Stable exponential random
  graph models with non-parametric components for large dense networks,}
  \textit{Social Networks}, 49, 67--80.

\bibitem[{van {D}uijn et~al.(2009)van {D}uijn, Gile, and Handcock}]{DuGiHa09}
van {D}uijn, M. A.~J., Gile, K.,  and Handcock, M.~S. (2009), \enquote{A
  framework for the comparison of maximum pseudo-likelihood and maximum
  likelihood estimation of exponential family random graph models,}
  \textit{Social Networks}, 31, 52--62.

\bibitem[{Vu et~al.(2013)Vu, Hunter, and Schweinberger}]{VuHuSc12}
Vu, D.~Q., Hunter, D.~R.,  and Schweinberger, M. (2013), \enquote{Model-based
  clustering of large networks,} \textit{The Annals of Applied Statistics}, 7,
  1010--1039.

\bibitem[{Wang and Bickel(2017)}]{Wa15}
Wang, Y.~X.,  and Bickel, P.~J. (2017), \enquote{Likelihood-based model
  selection for stochastic block models,} \textit{The Annals of Statistics},
  45, 500--528.

\bibitem[{Yang and Leskovec(2015)}]{JaLe15}
Yang, J.,  and Leskovec, J. (2015), \enquote{Defining and evaluating network
  communities based on ground-truth,} \textit{Knowledge and Information
  Systems}, 42, 181--213.

\bibitem[{Zhang and Zhou(2016)}]{ZhZh16}
Zhang, A.~Y.,  and Zhou, H.~H. (2016), \enquote{Minimax rates of community
  detection in stochastic block models,} \textit{The Annals of Statistics}, 44,
  2252--2280.

\end{thebibliography}

\begin{thebibliography}{2}
\newcommand{\enquote}[1]{``#1''}
\expandafter\ifx\csname natexlab\endcsname\relax\def\natexlab#1{#1}\fi

\bibitem[{{Brown}(1986)}]{Br86}
{Brown}, L. (1986), \textit{Fundamentals of Statistical Exponential Families:
  With Applications in Statistical Decision Theory}, Hayworth, CA, USA:
  Institute of Mathematical Statistics.

\bibitem[{Schweinberger and Stewart(2020)}]{ScSt16}
Schweinberger, M.,  and Stewart, J. (2020), \enquote{Concentration and
  consistency results for canonical and curved exponential-family models of
  random graphs,} \textit{The Annals of Statistics}, 48, 374--396.

\end{thebibliography}
\end{document}